\newtheorem{theorem}{Theorem}[section]
\newtheorem{lemma}[theorem]{Lemma}
\newtheorem{claim}[theorem]{Claim}
\newtheorem{corollary}[theorem]{Corollary}
\newtheorem{remark}[theorem]{Remark}
\newtheorem{definition}[theorem]{Definition}
\newcommand{\toprule}{\hrule height.8pt depth0pt \kern2pt}
\newcommand{\midrule}{\kern2pt\hrule\kern2pt}
\newcommand{\bottomrule}{\kern2pt\hrule\relax}
\newcommand{\algcaption}[2][]{%
  \refstepcounter{algorithm}%
  \@ifmtarg{#1}
    {\addcontentsline{loa}{figure}{\protect\numberline{\thealgorithm}{\ignorespaces #2}}}
    {\addcontentsline{loa}{figure}{\protect\numberline{\thealgorithm}{\ignorespaces #1}}}%
  \toprule
  \textbf{\fname@algorithm~\thealgorithm}\ #2\par
  \midrule
}
\newcommand\bbF{\ensuremath{\mathbb{F}}} 
\newcommand\cM{\ensuremath{\mathcal{M}}}
\newcommand{\E}{\mathbb{E}}
\newcommand{\Cmu}{\cC(\sig,\delta)}
\newcommand{\sig}{\sigma}
\newcommand{\cC}{\mathcal{C}}
\newcommand{\cG}{\mathcal{G}}
\newcommand{\acs}{\mathcal{A}}
\newcommand{\F}{{\mathbb{F}}}
\newcommand{\one}{{\mathbbm{1}}}
\DeclareMathOperator{\poly}{poly}
\newcommand{\iS}{\one_S}
\newcommand{\bP}{\mathbf{P}}
\newcommand\Enc{\ensuremath{\mathsf{Enc}}}
\newcommand\Dec{\ensuremath{\mathsf{Dec}}}
\newcommand\MLDec{\ensuremath{\mathsf{Dec}^{\mathsf{ML}}}}
\newcommand\supp{\ensuremath{\operatorname{supp}}}
\newcommand{\mnote}[1]{ \marginpar{\tiny\bf
            \begin{minipage}[t]{0.5in}
              \raggedright #1
           \end{minipage}}}
\renewcommand{\mnote}[1]{}
\title{Coding Sets with Asymmetric Information}
\author[1]{Alexandr Andoni}
\author[2]{Javad Ghaderi}
\author[1]{Daniel Hsu}
\author[1]{Dan Rubenstein}
\author[1]{Omri Weinstein}
\affil[1]{Department of Computer Science, Columbia University}
\affil[2]{Department of Electrical Engineering, Columbia University}
\begin{document}
\maketitle
{\def\thefootnote{}\footnotetext{E-mail: \texttt{\{andoni@cs,jghaderi@ee,djhsu@cs,danr@cs,omri@cs\}.columbia.edu}}}

\begin{abstract}
We study the following one-way asymmetric transmission problem, also a variant of model-based compressed sensing: a resource-limited encoder has to report a small set $S$ from a universe of $N$ items to a more powerful decoder (server).  The distinguishing feature is \emph{asymmetric information}: the subset $S$ is comprised of i.i.d.~samples from a prior distribution $\mu$, and $\mu$ is only known to the \emph{decoder}.  The goal for the encoder is to encode $S$ \emph{obliviously}, while achieving the information-theoretic bound of $|S| \cdot H(\mu)$, i.e., the Shannon entropy bound.

We first show that any such compression scheme must be {\em randomized}, if it gains non-trivially from the prior $\mu$.  This stands in contrast to the symmetric case (when both the encoder and decoder know $\mu$), where the Huffman code provides a near-optimal \emph{deterministic} solution.  On the other hand, a rather simple argument shows that, when $|S|=k$, a \emph{random} linear code achieves near-optimal communication rate of about $k\cdot H(\mu)$ bits. Alas, the resulting scheme has prohibitive \emph{decoding time}: about ${N\choose k} \approx (N/k)^k$.

Our main result is a \emph{computationally efficient} and \emph{linear} coding scheme, which achieves an $O(\lg\lg N)$-competitive communication ratio compared to the optimal benchmark, and runs in $\text{poly}(N,k)$ time. Our ``multi-level'' coding scheme uses a combination of hashing and syndrome-decoding of Reed-Solomon codes, and relies on viewing the (unknown) prior $\mu$ as a rather small convex combination of uniform (``flat'') distributions.
\end{abstract}

\thispagestyle{empty}
\newpage
\setcounter{page}{1}

\section{Introduction}
\label{sec:motivation}

We study the problem of coding a set with \emph{asymmetric}
information, defined as follows.  There is a universe $[N] :=
\cbr{1,2,\dotsc,N}$ of $N$ items, and the encoder's task is to
transmit a subset $S\subset [N]$ using an $m$-bit message so that a
decoder can reconstruct the set $S$ efficiently. In our setup, the
decoder has a {\em prior distribution} $\sigma$ over the sets $S$ that
may be sent, which is not available to the encoder.  The main goal is
to design compression schemes that (1) obtain communication rate as
close as possible to the information-theoretic minimum, namely the
(Shannon) entropy bound with respect to the distribution $\sigma$, and
(2) are computationally efficient.

This problem is the one-way communication version of the asymmetric
transmission problem \cite{adler1998protocols}, as well as a type of
model-based compressed sensing. While we expand on these a little
below, for now we note that the standard asymmetric transmission
problem is two-way, with the decoder sending much more information to
the encoder. Here we seek to eliminate this inefficiency, in the
setting of communicating a set $S$. One can envision many scenarios
where it is imperative to eliminate an expensive down-link from
decoder to encoder; we give one such scenario for designing very light
communication protocols for tracking ultra-low-power devices in
Internet-of-Things environments. Here, a common task is for a set of
such devices to communicate their identities to a router (e.g., an
entry point of a physical region)
\cite{gorlatova2009challenge,chen2016maximizing,buettner2011dewdrop}. Since
the devices are low power, the main goal is to minimize their total
communication costs. The communication can be further improved using
some side information, in particular a prior distribution on which
devices are more likely to be present (i.e., which sets are
more likely to be sent). However, the side information is typically
asymmetric: the prior is specific to the decoding router, or uses
statistics that are not known to or are too expensive to maintain by
the devices (see the discussion in \cite{adler1998protocols} or
\cite{patrascu06coding}).

In addition to the natural goal of communication efficiency, a common
requirement for such coding schemes is also to have a {\em computationally
  efficient} decoding procedure. Our goal here is for the decoding
time to be polynomial/linear in $N$ (which is the best we can hope for
without further assumptions --- the input to the decoder is the
distribution $\mu$, of potentially $\Omega(N)$ description
size)\footnote{With further assumptions---e.g., preprocessing---one
  may ask for sublinear runtime, of the order of $\poly(|S|, \lg N)$,
  as was accomplished in some compressed sensing literature; see,
  e.g., \cite{gilbert2010sparse, gilbert2012approximate}.}.

Without further assumptions on the distribution $\sigma$, this problem
does not admit any viable solutions: both communication and
computation are essentially doomed. Indeed, \cite{adler1998protocols,
  patrascu06coding} show that the trivial bound of $\sim N$
communication is required, even when the entropy of $\sigma$ is much
smaller. We note that \cite{adler1998protocols} circumvented this
barrier by allowing two-way communication where the decoder can send
much larger messages back to encoder, whereas we focus on purely {\em
  one-way} protocols only.  As for the distributional setting, a
generic (non-product) prior distribution $\sigma$ has a high
description complexity (exponential in $N$, or max set size), thus
dooming the time-efficiency of any decoding scheme.

In this paper, we consider the most natural class of priors $\sigma$
of i.i.d.~items: the sets $S \sim \mu^k$ are comprised of $k$ items, each drawn
independently from some distribution $\mu$ over $[N]$. We note that
this a common assumption, implicitly assumed in (vanilla) compressed
sensing, as well as classic (symmetric information) source-coding problems.  

For this setting, we develop protocols that achieve efficient decoding
time, and competitive communication costs. Our coding scheme is {\em
  linear}---the encoding is $C\cdot \one_S$ where $C$ is the coding
matrix and $\one_S$ is the indicator vector of the set $S$---which is
a further desirable property of coding scheme. This property is
similar to the one imposed in compressed sensing. Linearity
facilitates quick and simple updates to the message in
streaming/dynamic environments (e.g., in the IoT application above) as
the message can be simply updated as items are added one by one to the
set $S$.

\newcommand{\idlen}{\lg N} 
\newcommand{\numtag}{N}
\newcommand{\hash}[1]{h(#1)} 
\newcommand{\alltags}{[N]}

\subsection{Relation to Problems in Prior Literature}
\label{sec:prior}

Our problem relates to many other problems studied previously, but,
surprisingly, has not been explicitly studied. When there is {\em no
  side information}, the problem is the classic problem of coding a
set $S$.  Without requiring linearity, a trivial solution is to append
the indices of items in $S$, yielding communication $k\lg \numtag$ for
sets $S$ of size $k$.\footnote{We use $\lg$ to denote base-$2$
  logarithm.} If we further require linearity, then the problem
becomes a variant of compressed sensing. A slight caveat is that the
compressed sensing schemes usually work over reals \cite{CT, Don}, and
the vector $C\cdot \one_S$ is a real vector, which raises the issue of
rounding and real number representation. Nevertheless, it is possible
to do compressed sensing over the $\F_2$ field; see, e.g.,
\cite{draper2009compressed, seong2013necessary, li2014robust, DV13}.

Another related model is source coding, where both the encoder and the
decoder have access to some prior distribution $\mu$, and the set $S$
is composed of $k$ items i.i.d.~items drawn from $\mu$. Then a
(near-)optimal solution can be obtained via, say, {\em Huffman coding}
\cite{huffman1952method}. The length of the compression of a set $S$
is $\sum_{i\in S} \lceil\lg 1/\mu(i)\rceil$, which, in expectation, is
upper bounded by $k\cdot H(\mu)+k$, close to the information-theoretic
optimum of $k\cdot H(\mu)$ (up to the rounding issues).

When the side information is not known to the encoder (as it is in our
case), the problem becomes the classic asymmetric transmission problem
\cite{adler1998protocols, laber2002improved, ghazizadeh2001new,
  watkinson2001new, patrascu06coding} (see also
\cite{xiong2004distributed}). In this problem, the encoder generates
an item from a probability distribution $\mu$ and needs to communicate
its identity to the router/server (decoder). The goal is again to reach the
information capacity of $\approx H(\mu)$. While there are protocols
that achieve such capacity, the protocols require {\em two-way}
communication---the backchannel from the decoder to the encoder is on
the order of $\Omega(\lg N)$ bits. Furthermore, this is necessary:
\cite{adler1998protocols} show that either the encoder or decoder has
to communicate the trivial $\Omega(\lg N)$ bits
\cite{adler1998protocols} (see also the follow-up work of
\cite{patrascu06coding} for a lower bound on the number of interactive
rounds required).

In contrast, our protocols use one-way communication only. We
circumvent the above lower bound by exploiting the fact that the
encoder sends a {\em set} $S$ of items, instead of a single one, with
a randomized protocol. In particular, we can amortize the lower bound
of $\Omega(\lg N)$ against $|S|$ items. In other words, in our
setting, we encode a set $S$ using $m\ge \lg N$ bits, with the goal of
achieving $m \ll O(|S|\cdot \lg N)$ where possible.

Finally, we remark that the problem also falls under the umbrella of
model-based compressed sensing, where one generally assumes some prior
knowledge on {\em the possible structure} (model) of the set $S$
(beyond, say, an upper bound on its size); see, e.g.,
\cite{baraniuk2010model}. While the asymmetry is typically not an
explicit goal, the encoding schemes are usually agnostic to this prior
knowledge (e.g., the coding uses the usual matrix with random Gaussian
entries), and hence, in fact, constitute an asymmetric coding scheme.

\subsection{Formal Problem Setup}

There are a few ways to formalize our problem, and hence we introduce
three related definitions below, of growing generality. As before,
there is a universe $[N] := \cbr{1,2,\dotsc,N}$ of items.  For a
given set $S \subseteq [N]$, the encoder $\Enc \colon 2^{[N]} \to
\cbr{0,1}^m$ must construct a (possibly randomized) message $y :=
\Enc(S)$ of at most $m$ bits, where $m$ is the allowed message length,
fixed in advance. The decoder $\Dec_\star \colon \cbr{0,1}^m \to
2^{[N]}$, for some side-information $\star$, must produce a set $\hat
S := \Dec_\star(y)$ from the message $y$ such that $\hat S = S$ with,
say, at least $1-\delta$ probability, where $\delta$ is the error
probability parameter (think $\delta=0.1$).  Note that, when the side
information $\star$ is null, this task is generally impossible unless
$m \geq \lg 2^N = N$. Note that the encoder's message does not depend on
the side information, i.e., the encoding function $\Enc(S)$ is
\emph{oblivious} (in the information theory literature this is
referred to as \emph{universal compression} \cite{CSV03,HU14}; see
also Section \ref{sec:discussion}).

To measure the optimality of a coding scheme, we compare our message
lengths to the information-theoretic minimum, which we denote by the
parameter $m^*$ (which is a function of $\star$). In particular, for
$\alpha\ge1$, a coding scheme is called {\em $\alpha$-competitive} if it
uses $m$ bits while the ``information-theoretic optimal'' is $m^*\ge
m/\alpha$ bits. Note that the value of ``information-theoretic
optimal'' is not obvious, and in fact will differ between different
definition.

There are also a few ways to measure the success of a scheme.  We now
introduce a few related definitions of asymmetric coding in the order of
generality.

Following the discussion from before, one natural way to model the
side information is via a prior distribution $\sigma$ on subsets of
$[N]$. In particular, we assume $\sigma$ is a distribution on $k$
items, each drawn from a distribution $\mu$ on $[N]$.

\begin{definition}
\label{def:entropyAC}
For $N,m,\alpha\ge 1$, a (randomized) scheme $\acs=(\Enc,\Dec)$
is {\em entropy-asymmetric-coding} $\alpha$-competitive scheme if: for any
integer $k$, and prior $\mu$ on $[N]$ such that $k\cdot H(\mu)\le
m/\alpha$, we have the following where the prior $\sigma$ generates a set
of $k$ items drawn iid from $\mu$:
$$
\Pr_{\acs,S\sim \sigma}[\Dec_\sigma(\Enc(S))=S]\ge 1-\delta.
$$

We clarify that the randomness of the encoder and decoder is via a
shared random string, which is an (auxiliary) input to both $\Enc$
and $\Dec$.
\end{definition}

Note that $m^*=k\cdot H(\mu)$ is the lower bound on communication
necessary to transmit a set $S$ of $k$ items drawn iid from $\mu$. The
trivial scheme would achieve a bound\footnote{The more precise bound
  is $\lg {N \choose k}\approx k\lg N/k$, but since we think of $k\ll
  N$, this amounts to a negligible difference.} of $k\lg N$, which can be much
higher than $kH(\mu)$.

We now consider a slightly more general definition, where we do not
need to fix the size $k$ of $S$, but rather be ``adaptive'' to the
number of items in the set $S$, in the analogy to what the Huffman
coding achieves in the symmetric case.

\begin{definition}
For $N,m,\alpha\ge1$, a (randomized) scheme $\acs=(\Enc,\Dec)$ is said to
be a {\em Huffman-asymmetric-coding} $\alpha$-competitive scheme if:
for any distribution $\mu$ over $[N]$, if the set $S$ satisfies
\begin{equation}
\label{eqn:huffmanBound}
\sum_{i\in S} \lg 1/\mu(i) \le m^*,
\end{equation}
where $m^*=m/\alpha$, then
$$
\Pr_{\acs}[\Dec_\mu(\Enc(S))=S]\ge 1-\delta.
$$
\end{definition}

In particular, a Huffman-asymmetric-coding $1$-competitive scheme
matches the performance of the aforementioned Huffman coding (where
the encoder knows the prior $\mu$), for $\delta=0$
(deterministically). We also note that Eqn.~\eqref{eqn:huffmanBound}
(with $\alpha=1$) is the tightest condition we can require in order
for a set $S$ to be decodable with a classic Huffman code. Hence, the
above definition asks to match the efficiency of the Huffman code
(symmetric information setting) in the {\em asymmetric} setting, up to
$\alpha$-factor loss in communication.

It is not hard to note that Huffman-asymmetric-coding scheme is more
general than the entropy-asymmetric-coding scheme: if we pick a random
set $S$ as in Def.~\ref{def:entropyAC}, then it satisfies
Eqn.~\eqref{eqn:huffmanBound} (up to a small loss in communication
efficiency). See Claim~\ref{huffman-to-entropy} in Appendix
\ref{sec_distribution_vs_lists}.

Finally, we give the most general definition, which is the most
natural from an algorithmic perspective, but is less operational than
the two above. It stems from the observation than any desirable
encoding/decoding scheme is (implicitly) specifying a \emph{list}
(ordered set) $L \subseteq 2^{[N]}$ of subsets $S\subseteq[N]$ that
are decoded correctly. It is immediate to see that any such list $L$
can have at most $2^m$ such sets. In the presence of a prior
distribution $\sigma$, one could take these sets to be the ``most
likely'' in $\sigma$ (with ties broken arbitrarily).

\begin{definition}
For $N,m,\alpha\ge1$,
a (randomized) scheme $\acs=(\Enc,\Dec)$ is said to be a {\em
  list-asymmetric-coding} $\alpha$-competitive scheme if: for any list
$L$ of sets $S\subseteq [N]$, where $|L|\le 2^{m/\alpha}$, and any
$S\in L$, we have that:
$$
\Pr_{\acs}[\Dec_L(\Enc(S))=S]\ge 1-\delta.
$$
\end{definition}

Again, the latter definition is more general than both the
definitions. In particular, a list-asymmetric-coding scheme is also a
Huffman-asymmetric-coding scheme: given a prior $\mu$, just fix the
list $L$ to be the sets satisfying condition
\eqref{eqn:huffmanBound}. It is easy to see that the
size of the list will be $\le e2^{m/\alpha}$ (which results in just an
additive $\lg e$ additive loss in communication); see details in
Claim~\ref{list-to-huffman} in Appendix~\ref{sec_distribution_vs_lists}.

The last definition has the major downside that one has to specify a
list $L$ to the decoder, which is exponential in $m$, thus affecting
the computational efficiency of a coding scheme. Therefore, for
algorithmic efficiency, it is more natural to work with the
Huffman-asymmetric-coding definition, which is the focus here.

\subsection{Our Results}

First, we establish that any asymmetric-coding scheme must be randomized
if it is to non-trivially exploit the prior $\mu$ or list $L$. In
particular, if $\delta=0$ (i.e., no randomization), then, there exists
some priors where the optimal communication in the symmetric case is
$m^*= O(|S|\cdot \lg |S|)$, but any asymmetric-coding scheme must have
$m\approx \Theta(|S|\cdot \lg N)$. See details in
Section~\ref{sec:lowerBound}.

Second, as a warm-up, we show a simple scheme that solves the most
general definition, of list-asymmetric-coding scheme, but which is not
computationally efficient.

\begin{theorem}[Information-theoretic; see Section~\ref{sec:random}]
\label{thm:randomCode}
Fix error probability $\delta>0$. There is an $\alpha$-competitive
list-asymmetric-coding scheme with
$\alpha=\tfrac{m}{m-\lg1/\delta}=1+o(1)$, while achieving
error probability of $\delta$.
\end{theorem}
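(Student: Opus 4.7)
The plan is to use a standard random linear coding argument over $\F_2$. Sample a uniformly random matrix $C \in \F_2^{m \times N}$ via the shared random string, define $\Enc(S) := C \cdot \one_S$, and let $\Dec_L(y)$ output any $S' \in L$ with $C \cdot \one_{S'} = y$ (breaking ties arbitrarily, failing if no such $S'$ exists). This is manifestly linear, oblivious to $L$, and uses exactly $m$ bits.

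For correctness, fix the true $S \in L$. The decoder errs only if there exists some $S' \in L \setminus \{S\}$ with $C \one_{S'} = C \one_S$, i.e., $C (\one_{S'} - \one_S) = 0$ (viewing the difference in $\F_2^N$, so it is simply $\one_{S \triangle S'}$). Since $S' \ne S$, the vector $v := \one_{S \triangle S'} \in \F_2^N$ is nonzero, so over the randomness of $C$, the image $Cv$ is uniformly distributed in $\F_2^m$; hence the collision probability is exactly $2^{-m}$. A union bound over the at most $|L| - 1 \le 2^{m/\alpha}$ alternative sets yields failure probability at most $2^{m/\alpha} \cdot 2^{-m} = 2^{-m(1 - 1/\alpha)}$.

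It then suffices to pick $\alpha$ so that $2^{-m(1-1/\alpha)} \le \delta$, which rearranges to $\alpha \ge m/(m - \lg(1/\delta))$. Taking equality gives the claimed $\alpha = 1 + o(1)$ whenever $m \gg \lg(1/\delta)$, and the scheme is list-asymmetric-coding $\alpha$-competitive as required.

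There is essentially no hard step here: the argument is a textbook probabilistic existence / random-coding bound, with the only non-trivial ingredient being the observation that linearity over $\F_2$ reduces pairwise collisions to the uniform image of a single fixed nonzero vector, which removes any dependence on the structure of $S$ or $S'$. The proposition is purely information-theoretic because decoding runs exhaustive search over $L$ (of size up to $2^{m/\alpha}$), foreshadowing precisely the computational bottleneck that the main results of the paper will need to circumvent.
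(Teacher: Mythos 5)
Your proof is correct and matches the paper's approach essentially line for line: a uniformly random $C \in \F_2^{m\times N}$, linear encoding $C\one_S$, exhaustive-search decoding over $L$, collision probability $2^{-m}$ for each fixed alternative via the uniform image of the nonzero difference vector, and a union bound over the $|L|-1$ alternatives. The only cosmetic difference is that the paper's decoder returns the first matching set in list order while yours breaks ties arbitrarily; the resulting bounds are identical since both are proved by union-bounding over all $S'\ne S$.
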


The scheme is a standard one: a random linear code. In particular,
pick a random $C\in M_{m\times N}(\F_2)$, and set $\Enc(S)=C\cdot
\one_S$ (all computations are done in $\F_2$). The decoder $\Dec(y)$
is the ``maximum likelihood'' decoder: for a given list $L$, go over
the list in order and output the first set $\hat S\in L$ such that
$C\one_{\hat S}=y$. See Section \ref{sec:random} for further details
and proofs.

While the above scheme achieves the information-theoretic bound (up to
additive $\lg 1/\delta$), it is {\em
  not computationally-efficient} and requires runtime of about
$\Omega(2^m)$. Even when the list $L$ is somehow more efficiently
represented (e.g., all sets $S$ that satisfy the Huffman condition
Eqn.~\eqref{eqn:huffmanBound}), the problem appears computationally
hard. In particular, it is a variant of the classic problem of
decoding random linear codes. Obtaining a coding scheme with faster
decoding is precisely the focal point of our work:

\begin{quote}
\bf Main goal: \rm \emph{ Develop computationally efficient oblivious
  compression schemes, that have only $\text{poly}(N)$
  encoding/decoding time, at the expense of a (mild, multiplicative)
  overhead in communication cost compared to random codes
  ($\alpha$-competitive).}
\end{quote}

Our main result is the design of a {\em computationally-efficient},
Huffman-asymmetric-coding scheme which is optimal up to a
$O(\log\log N)$-factor loss in the message length.

\begin{theorem}[Main; see Section~\ref{sec:main}]
\label{thm:main}
Fix target message length $m>\lg N+4$, and error probability
$\delta\ge 1/\lg N$. There is a linear
Huffman-asymmetric-coding scheme, which is $O(\log\log N)$-competitive, and has
$\text{poly}(N)$ decoding time and error probability of $\delta$.
\end{theorem}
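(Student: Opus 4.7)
The plan is to exploit the following structural insight: any prior $\mu$ on $[N]$ can be written (approximately) as a convex combination of $O(\lg N)$ ``flat'' distributions by grouping items according to their probability mass. Specifically, partition $[N]$ into \emph{levels} $L_j := \{i \in [N] : \mu(i) \in [2^{-j-1}, 2^{-j})\}$ for $j=1,\ldots,L$ with $L = O(\lg N)$. Each level satisfies $|L_j| \le 2^{j+1}$ (else total mass exceeds $1$), and within a level the items are essentially uniform with respect to $\mu$. For a set $S$, write $S_j := S\cap L_j$; then the Huffman budget decomposes cleanly as $m^* = \sum_{i\in S}\lg(1/\mu(i)) = \sum_j |S_j|\cdot j \pm O(|S|)$, so it suffices to recover each $S_j$ using roughly $|S_j|\cdot j$ bits in order to match the Huffman lower bound up to constant factors.

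For each level $j$ I would design a linear sketch tailored to recovering $S_j \subseteq L_j$. Since the effective alphabet size is $2^{j+1}$ and sparsity of $S_j$ is modest, a Reed--Solomon code over $\F_q$ with $q = \Theta(2^{j+1})$ admits syndrome decoding of any vector of Hamming weight at most $t_j$ from a syndrome of length $O(t_j)$ field elements, in $\poly(q, t_j)$ time. To fit the $N$-dimensional vector $\one_S$ into this alphabet, the encoder applies a pairwise-independent hash $h_j:[N]\to\F_q$ and forms the $j$th syndrome $y_j = H_j \cdot h_j(\one_S) \in \F_q^{O(t_j)}$; this makes the combined scheme $C\cdot \one_S$ linear and oblivious, since $h_j$ does not depend on $\mu$. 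The decoder, who knows $\mu$ and hence $L_j$, restricts attention to columns indexed by $L_j$ when running RS syndrome decoding.

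The principal nuisance is \emph{cross-level interference}: items in $S_{j'}$ for $j'\ne j$ also contribute to $y_j$ through $h_j$. I would handle this by decoding the levels \emph{iteratively}, processing levels from small $j$ (heavy, few items from the Huffman condition) up to large $j$ (light, more items but more sketch budget): after recovering $S_1,\dots,S_{j-1}$, the decoder subtracts their known contribution from $y_j$ and runs RS decoding on the residual restricted to $L_j$. Concentration (Chernoff on the Bernoulli profile of $S_j \sim \mathrm{Bin}(k,\mu(L_j))$) plus the randomness of $h_j$ keeps the number of residual ``errors'' within the correction radius $t_j$, except with probability absorbed into $\delta$. Combined with the linearity, this gives $\poly(N)$ encoding and decoding time dominated by RS decoding on each of the $L$ levels.

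The main obstacle, and where I expect the real work lies, is shaving the competitive ratio from the naive $O(\lg N)$ down to $O(\lg\lg N)$. The naive allocation assigns a uniform budget of $m/L$ bits per level, which forces each $t_j \approx m/(j L)$ and yields $\alpha = O(\lg N)$. To do better, I would group the $O(\lg N)$ levels into $O(\lg\lg N)$ ``meta-levels,'' each containing a doubling band of consecutive $j$'s of geometric width $\approx \lg N / \lg\lg N$, and allocate a single RS sketch per meta-level sized by the \emph{aggregate} Huffman budget it could absorb; alternatively, within each level use $O(\lg\lg N)$ sketches of geometrically increasing correction radius $t_{j,i} = 2^i$ and have the decoder pick the smallest $i$ for which the RS syndrome decodes consistently (verified by re-encoding). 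Either way, the amortization of a logarithmic number of guesses/bands against the Huffman sum $\sum_j |S_j|\cdot j$ is what produces the $O(\lg\lg N)$ overhead, and combining it with the cross-level peeling above should complete the proof of Theorem~\ref{thm:main}.
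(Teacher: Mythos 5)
Your plan converges on the paper's actual scheme: partition $[N]$ by probability mass, have the (oblivious) encoder hash and encode the \emph{entire} set at every level, use syndrome decoding of Reed--Solomon parity-check matrices (exactly the \cite{DV13} construction the paper invokes as its single-level primitive), and peel iteratively from heavy to light, exploiting linearity to subtract the already-decoded items from the later syndromes. The collision requirements you gesture at are precisely the paper's event $E_t$: $h_t$ injective on the bucket, and no collisions between the bucket's $S$-items and the surviving lighter $S$-items.

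The decisive step---which you correctly identify as ``where the real work lies''---is stated too vaguely to be checked, and one reading of it is wrong. The paper does not start from $O(\lg N)$ levels and re-group; it defines $T=O(\lg\lg N)$ buckets directly with \emph{doubly-exponentially decaying} thresholds, $B_t := \{i : 2^{-2^t}\le\mu(i)<2^{-2^{t-1}}\}$, i.e., the $t$-th meta-level covers your $j\in[2^{t-1},2^t)$ with widths $1,2,4,8,\dots$, not a constant width $\lg N/\lg\lg N$. Your phrase ``doubling band'' suggests the former, but ``geometric width $\approx \lg N/\lg\lg N$'' reads as the latter, and the latter fails: with doubling bands, $|B_t|\le 2^{2^t}$ gives $\lg D_t = O(2^t + \lg m^*)$ while the Huffman bound forces $|S_{t:T}|\le m^*/2^{t-1}$, so the level-$t$ sketch costs $|S_{t:T}|\cdot\lg D_t = O(m^*)$ uniformly and the total is $O(m^*\lg\lg N)$; with a constant first-band width $w=\lg N/\lg\lg N$, the level-$1$ hash universe already needs $\lg D_1 = \Omega(w)$, and since $|S|$ can be $\Theta(m^*/\lg m^*)$ (the paper's entropic inequality), that single sketch costs $\Omega(m^* w/\lg m^*)$, which dwarfs $m^*\lg\lg N$ when $m^*$ is polylogarithmic in $N$. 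So you must commit to the doubling (doubly-exponential) grouping; the uniform grouping is not a near-miss but a failure. Your second alternative---per-level sketches with geometrically spaced correction radii and decoder-side verification---is genuinely different from the paper's, but as stated it is underspecified: the encoder is oblivious and must transmit all radii, so without tying the maximum radius to (a guess of) $m^*$ the total length is not controlled against the Huffman benchmark.

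Two further details the paper handles that your sketch elides: (1) the identification of $S_t$ uses that recovered hash positions are matched against $h_t(B_t)$ on the decoder side, which is where the knowledge of $\mu$ enters and why collisions within $B_t$ and between $S_t$ and $S_{t+1:T}$ are the only ones that matter; (2) the per-level sparsity bound fed to the RS decoder is $\min\{m^*/2^{t-1},\,4m^*/\lg m^*\}$, where the second term is needed to control the first few levels and comes from the entropy inequality above---without it the $t=1$ sketch is not controlled. These are the concrete pieces that turn your outline into the theorem.
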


\subsection{Technical Overview of Theorem \ref{thm:main}}

The proof of Theorem \ref{thm:main} is based on a ``multi-level"
coding scheme.  The basic building block of our ``multi-level" coding
scheme is the \emph{uniform} compressed sensing scheme of \cite{DV13},
which is the finite-alphabet equivalent of standard compressed sensing
schemes (with a ``uniform'' prior).  In particular, their scheme is a
computationally efficient linear sparse recovery scheme for $k$-sparse
vectors in $\F_2^N$, using $O(k\log N)$ bits. Their (deterministic)
scheme relies on \emph{syndrome decoding} of linear codes, which allows
to decode in polynomial time any $k$-sparse vector $x \in \F^N_2$,
using the \emph{parity check} matrix $C_{RS}$ of Reed-Solomon codes
with the appropriate rate/dimension generated by a binary symmetric
(BSC) channel (see Section \ref{sec:syndrome} for details).

Recall that in our setup, the prior $\mu$ is nonuniform and
\emph{unknown} to the encoder. We view the ground set of $[N]$ items
as being partitioned into $T$ buckets of doubly-exponentially decaying
probabilities w.r.t.~$\mu$, where bucket $B_i$ contains all elements
with probability between $2^{2^{-i}}$ and $2^{-2^{i+1}}$
w.r.t.~$\mu$. This allows us to set $T$ to be
\emph{doubly-logarithmic}, i.e., $T=O(\lg \lg N)$.

The encoder sends $T$ concatenated messages, where the goal of the
$i^{th}$ message is to allow the decoder to decode the subset $S \cap
B_i$, where $S\sim \mu^k$ is the input set at the encoder. For each
``level'' $i$, the encoder uses an appropriately-sized sensing matrix
$C^{(i)}_{RS}$, whose dimensions are determined by the (worst-case)
number of elements that could be encoded from $B_i$ (here we
implicitly assume that $\mu$ is uniform on $B_i$, which may lose a
factor of $\le 2$ w.r.t the optimal message size per item, since the
{\em encoding lengths} of items in $B_i$ are within a factor 2). Since
in the $i$th step we only need to distinguish items in $B_i$, the
encoder first \emph{hashes} the set $S$ to the minimal universe
$N_i\ll N$ that still ensures collision-freeness in $B_i$ (using a
\emph{public} hash function shared by the encoder and the decoder),
and $C^{(i)}_{RS}$ is applied to the \emph{hashed} vector in the
reduced universe. This carefully-chosen universe-reduction
``preprocessing" step is essential to save on communication---e.g.,
using \cite{DV13} on $k$ items will cost us only $\sim k\log N_i\ll
k\log N$. Note that, the encoder doesn't actually know the items
$B_i$, and hence we don't know the items $S\cap B_i$ to be encoded in
the level $i$ either.  Instead, the level $i$ encoding will contain
all items $S$ (this is precisely where we lose the $O(\log\log
N)$-factor in communication overall), and the identification of the set
$S\cap B_i$ is done at decoding time only, as described next.

Our decoding procedure is \emph{adaptive} and runs in $T$ successive
steps. In the $i^{th}$ step, we assume we've already successfully
decoded items $S\cap B_{<i}=S\cap (B_1\cup B_2\cup\ldots B_{i-1})$.
The decoder then ``peels off'' the encoding of $S\cap B_{<i}$ from the
original message that it has received. This step crucially uses the
\emph{linearity} of the encoding scheme. The remaining $i^{th}$ level
message now encodes items $S\cap (B_i\cup B_{i+1}\cup\ldots B_{T})$,
which allows us to decode $S\cap B_i$. Note that, in addition to the
aforementioned required property of no collisions inside $B_i$, we
also need universe $[N_i]$ to be sufficiently large so that there are
no collisions between items $B_i$ and in $S\cap B_{>i}$ --- otherwise
we may misidentify an item from $S\cap B_{>i}$ as being in
$B_i$. Luckily, as $|S\cap B_{>i}|\le |S|$ is generally much smaller
than $|B_i|$, this new condition on $N_i$ does not ultimately
influence the communication bound.  Note that, at level $i$, the
decoder will decode any item in $B_i$, and potentially identify that there
exist items $S\cap B_{>i}$ (which will be left for the subsequent steps).

We present the full details of our coding scheme and its analysis in
Section~\ref{sec:main}.

\subsection{Discussion and Open Problems}
\label{sec:discussion}

Finally, on a somewhat different note, noiseless compression in asymmetric scenarios was also
previously studied in the information theory literature, in the
context of \emph{universal compression} (see e.g.,
\cite{CSV03,HU14,DV13} and references therein).  This line of work
exploits an elegant connection between channel coding and
source coding, via \emph{syndrome-decoding},
a connection that also plays an important role as a sub-procedure in our main result (Theorem \ref{thm:main},
see also the discussion in Section \ref{sec:syndrome}).
These works exhibit (fixed-length) codes with efficient encoding and decoding procedures
against a subclass of discrete \emph{memoryless} channels (DMCs),
e.g., via belief-propagation for LDPC codes \cite{CSV03} and Turbo codes \cite{GfZ02}.
The main difference of our model is that the aforementioned line of work
relies on an interpretation of the set to be encoded ($S$) as a (sparse) additive noise vector
generated by a discrete \emph{memoryless} channel (or even
further restricted symmetric channels such as BSC), where each
coordinate in $[N]$ is corrupted by the channel \emph{independently with
identical} probability.  Indeed, decoding
procedures such as belief-propagation algorithms are only
guaranteed to converge under specific DMC channels such as BSC.
This assumption is equivalent in our model to considering only
\emph{i.i.d}
distributions $\mu$ on the $[N]$ coordinates (i.e., each item $i$ is
present $i\in S$ iid with certain probability), whereas we wish to
deal with \emph{arbitrary} product distributions $\mu^k$, $\mu \in
\Delta([N])$ (where $\Delta([N])$ denotes the set of all distributions over $[N]$).

\paragraph{Open questions.} As we view this work as an initial step in
the study of asymmetric compression, there are a few natural aspects
of our assumptions that require further research:

\begin{itemize}[leftmargin=2em]
\item
The most straightforward open question is whether the message length
for product distributions over subsets of $[N]$ can be improved from
$O_\delta(\lg\lg N)$ multiplicative overhead to $O(\lg(1/\delta))$
overhead, or even further to $O(\lg(1/\delta))$ {\em additive}
overhead (matching the information bound of the baseline scheme from
Theorem~\ref{thm:randomCode}), while insisting on $\text{poly}(N)$
decoding time. We note that even the scheme of \cite{DV13} (for the
uniform prior case) is only 2-competitive.
\item
As hinted before, we may also want decoding time which is sublinear in
$N$, e.g., $\text{poly}(m,\log N)$. Note that this may be possible
only if we allow the decoder to do preprocessing---otherwise, already
its input $\mu$ has $\Omega(N)$ description size.
\item
Are the above goals simpler if we allow {\em non-linear} coding? Our
scheme is linear, and we do not know if there exist more efficient
non-linear coding schemes.

\item
Another important direction is to identify other natural instances of
\emph{non-product} distributions $\sigma$, where the problem is
meaningful and poly-time, competitive coding schemes exist. As
mentioned before, such a distribution $\sigma$ must at minimum have a
\emph{succinct} description.  A natural candidate family for modeling
such succinct joint distributions on subsets of $[N]$ are
\emph{graphical models} \cite{WJ08}. It would be very interesting to
develop 
compete with the (possibly much lower) entropy benchmark of joint
distributions generated by low-order graphical models.

\item
Finally, one may want to construct schemes that have a somewhat better
probability guarantee (somewhat akin to ``for all'' vs ``for each''
guarantee). While fully deterministic schemes are impossible, it may
be possible to obtain the following guarantee: with probability
$1-\delta$, the decoder decodes correctly {\em any} set $S\in L$. It
turns out that this is possible for the random code solution (see
Corollary~\ref{cor:mldec-forall}). It would be interesting if our main
(computationally-efficient) result can be extended to this case as
well.
\end{itemize}

\section{A Basic Scheme: Random Linear Codes}
\label{sec:random}

We establish Theorem \ref{thm:randomCode} by designing a
list-asymmetric-coding scheme via a \emph{random linear} code. It
achieves essentially optimal communication (up to additive $O(1)$
bits), nearly matching the performance of the symmetric-information
schemes. The runtime of this scheme is exponential in $m$.

Consider a randomized linear scheme where $C$ is a uniformly random
matrix $C \in \bbF_2^{m \times N}$, and $\Enc(S)=C\cdot \one_S$. The
decoder for a list $L = (S_1, S_2, \dotsc, S_{\abs{L }})$ is the
``maximum likelihood'' decoder: given the message $y$, the decoder
returns the \emph{first} set $S$ in the list $L $ such that $\Enc(S) =
y$:
\begin{equation*}
  \MLDec_L (y)
  \ := \
  S_{\min\cbr{ t \in \sbr{\abs{L }} : \Enc(S_t) = y }}
  \,.
\end{equation*}
(The random matrix $C$ is determined using the public random bits).
For brevity, we call this the \emph{random linear scheme}.

The next lemma
establishes that the random linear scheme is a list-asymmetric-coding
scheme for any $\delta \in \intoo{0,1}$ and any list of at most
$2^{m}\cdot \delta=2^{m-\lg1/\delta}$ subsets of $[N]$. It implies
Theorem~\ref{thm:randomCode} since the competitiveness is
$\alpha=\tfrac{m}{m-\lg 1/\delta}$.

\begin{lemma}
  \label{thm:mldec}
  Let $C$ be a random $m\times N$ binary matrix. Then for any list $L$
  of $|L|\le 2^{m}$ subsets of $[N]$, and any $S \in L$:
  \begin{equation*}
    \Pr_C\del{
      \MLDec_L (C\cdot \iS) = S
    }
    \ \geq \
    1 - \del{|L| - 1} 2^{-m}
    \,.
  \end{equation*}
\end{lemma}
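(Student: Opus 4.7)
The plan is to identify the failure event of the ML decoder precisely, translate it into a linear-algebraic condition on the random matrix $C$, and then apply a union bound. By construction, $\MLDec_L(C\iS)$ returns the first $S_t\in L$ with $C\mathbb{1}_{S_t}=C\iS$. Since $S\in L$, the decoder returns $S$ unless there is some $S'\in L$ with $S'\ne S$ appearing earlier than $S$ in the list such that $C\mathbb{1}_{S'}=C\iS$. I will upper bound the failure probability by the weaker event that \emph{some} $S'\in L\setminus\{S\}$ satisfies $C\mathbb{1}_{S'}=C\iS$.

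For a fixed $S'\ne S$, let $v=\mathbb{1}_{S'}+\iS\in\bbF_2^N$, i.e., the indicator of the symmetric difference $S\triangle S'$. Because $S\ne S'$, we have $v\ne 0$. The collision event $C\mathbb{1}_{S'}=C\iS$ is equivalent (working in $\bbF_2$) to $Cv=0$. The key step is the following simple fact: when $C$ is a uniformly random element of $\bbF_2^{m\times N}$ and $v\in\bbF_2^N$ is any fixed nonzero vector, the image $Cv$ is uniformly distributed on $\bbF_2^m$. I would prove this by noting that each row of $C$ is an i.i.d.\ uniform element of $\bbF_2^N$, and for any such row $r$, the inner product $\langle r,v\rangle$ is an unbiased Bernoulli because at least one coordinate $i$ with $v_i=1$ exists and $r_i$ is uniform and independent of the remaining coordinates. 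Hence $\Pr_C[Cv=0]=2^{-m}$.

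Applying the union bound over the $|L|-1$ choices of $S'\in L\setminus\{S\}$ yields
\[
\Pr_C\!\left[\MLDec_L(C\iS)\ne S\right] \;\le\; \sum_{S'\in L\setminus\{S\}}\Pr_C\!\left[C\mathbb{1}_{S'}=C\iS\right] \;=\; (|L|-1)\,2^{-m},
\]
which is exactly the claimed bound. There is no serious obstacle here: the whole argument reduces to the pairwise-independence-style observation that $Cv$ is uniform for any fixed nonzero $v$, together with a union bound. The only mildly subtle point worth flagging in the write-up is that it suffices to union bound over all of $L\setminus\{S\}$ (rather than only sets preceding $S$ in the list), since the additional events can only slacken the inequality; this avoids any dependence on the position of $S$ within $L$.
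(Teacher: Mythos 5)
Your proof is correct and follows essentially the same route as the paper: identify the collision event, observe that for a fixed nonzero symmetric-difference indicator $v$ the vector $Cv$ is uniform on $\bbF_2^m$ so $\Pr[Cv=0]=2^{-m}$, and union bound. The paper's union bound is over the (at most $|L|-1$) sets preceding $S$ in $L$ while yours is over all of $L\setminus\{S\}$; both give the same stated bound, so this is only a cosmetic difference.
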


\begin{proof}
For any pair of sets $S, S'$ in the list $L $, we use $S \prec_L S'$ to
denote that $S$ appears before $S'$ in $L $.
We also let $S \triangle S' := (S \setminus S') \cup (S' \setminus S)$ denote
the symmetric difference between $S$ and $S'$.
Finally, for $i \in [N]$ and $j \in [m]$, we let $c_i(j)$ denote the $j$-th
entry of the code word $c_i$.

  The decoder outputs a set $\hat S := \MLDec_L (\Enc(S)) \neq S$ if and only
  if there is exists $S' \neq S$ such that $S' \prec_L S$ and $\sum_{i \in
  S'} c_i = \sum_{i \in S} c_i$.
  For any set $S' \prec_L S$ in $L $,
  \begin{align*}
    \Pr\del{
      \sum_{i \in S'} c_i
      =
      \sum_{i \in S} c_i
    }
    & \ = \
    \prod_{j=1}^m
    \Pr\del{
      \sum_{i \in S'} c_i(j)
      =
      \sum_{i \in S} c_i(j)
    }
    \\
    & \ = \
    \prod_{j=1}^m
    \Pr\del{
      \sum_{i \in S' \triangle S} c_i(j)
      =
      0
    }
    \ = \
    2^{-m}
    \,.
  \end{align*}
  By a union bound,
  \begin{align*}
    \Pr\del{
      \MLDec_L (\Enc(S)) \neq S
    }
    & \ = \
    \Pr\del{
      \exists S' \prec_L S \centerdot
      \sum_{i \in S'} c_i
      =
      \sum_{i \in S} c_i
    }
    \\
    & \ \leq \
    \sum_{S' \prec_L S}
    \Pr\del{
      \sum_{i \in S'} c_i
      =
      \sum_{i \in S} c_i
    }
    \\
    & \ \leq \
    \del{\abs{L } - 1} 2^{-m}
    \,.
    \qedhere
  \end{align*}
\end{proof}


In fact, one can prove a slightly stronger guarantee of success: that,
for any fixed list $L$, with probability at least $1-\delta$, the
decoder decodes correctly {\em any} set $S\in L$. This leads to
slightly worse competitiveness: $\alpha=2+o(1)$. In particular,
$m$-sized code can decode only lists of size $2^{m^*}$ where
$m^*=\tfrac{1}{2}(m-\lg 1/\delta)$. The following corollary is
immediate from the above.

\begin{corollary}
  \label{cor:mldec-forall}
  Let $C$ be a random $m\times N$ $0/1$ matrix.
  Then for any list $L $ of subsets of $[N]$,
  \begin{equation*}
    \Pr_C\del{
      \forall S \in L \centerdot
      \MLDec_L (C\cdot \iS) = S
    }
    \ \geq \
    1 - |L|\cdot \del{|L| - 1} 2^{-m}
    \,.
  \end{equation*}
\end{corollary}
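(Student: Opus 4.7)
The plan is to derive the corollary immediately from Lemma~\ref{thm:mldec} by a union bound over the sets $S\in L$. The single-set guarantee is the nontrivial content; the ``for all'' quantifier only costs an extra factor of $|L|$ in the failure probability.

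Concretely, I would fix any list $L$ of subsets of $[N]$, and for each fixed $S\in L$ invoke Lemma~\ref{thm:mldec} (with this same list $L$) to conclude
\[
\Pr_C\bigl[\MLDec_L(C\cdot \one_S)\neq S\bigr] \ \leq\ (|L|-1)\,2^{-m}.
\]
Then I would rewrite the ``for all'' failure event as the union
\[
\bigl\{\exists S\in L : \MLDec_L(C\cdot \one_S)\neq S\bigr\}
\ =\ \bigcup_{S\in L}\bigl\{\MLDec_L(C\cdot \one_S)\neq S\bigr\},
\]
and apply the union bound over the $|L|$ sets $S$ to obtain a total failure probability at most $|L|\cdot(|L|-1)\,2^{-m}$. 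Complementing yields the stated bound.

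There is no real obstacle: Lemma~\ref{thm:mldec} has already absorbed the genuine probabilistic content (analyzing collisions $C\one_S = C\one_{S'}$ for random $C$ via independence of columns), and the step from ``for each'' to ``for all'' is a one-line union bound. As a sanity check for the $\alpha = 2+o(1)$ claim mentioned after the corollary: setting $|L|(|L|-1)\,2^{-m}\leq \delta$ forces roughly $|L|\leq 2^{(m-\lg 1/\delta)/2}$, i.e., decodable list sizes up to $2^{m^*}$ with $m^* = \tfrac{1}{2}(m-\lg 1/\delta)$, which is exactly the factor-of-two worse competitiveness relative to Theorem~\ref{thm:randomCode}. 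If a tighter constant were desired one could alternatively union-bound over unordered pairs $\{S,S'\}\subseteq L$ having $C\one_S = C\one_{S'}$ (since existence of such a colliding pair is equivalent to failure for some $S\in L$), saving a factor of two; but the stated bound already suffices.
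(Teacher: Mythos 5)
Your proof is correct and matches the paper's intent exactly: the paper states the corollary is ``immediate from the above,'' meaning precisely the union bound over $S\in L$ applied to Lemma~\ref{thm:mldec} that you carry out. Your remark about tightening by a factor of two via unordered colliding pairs is a nice observation but not needed for the stated bound.
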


\section{Main Result: $O(\log\log N)$-competitive Coding Scheme}
\label{sec:main}

In this section, we prove Theorem~\ref{thm:main}, by designing a
computationally efficient Huffman-asymmetric-coding scheme. The
resulting algorithm is termed the \emph{multi-level scheme} (for
reason that will soon be apparent).

Let $ \Delta([N])$ be the space of all distributions with support $[N]$. Our algorithm supports distributions $\mu$ from the following class
\begin{equation*}
  \cM
  \ := \
  \cbr{
    \mu \in \Delta([N]) : 1/4N \leq \mu(i) < 1/2, \ \forall i \in [N]
  }
  \,.
\end{equation*}
While this is a restriction from a general distribution
$\mu\in\Delta([N])$, it is without loss of generality: we can
transform any distribution into a distribution $\mu''\in\cM$ (up to a
loss of at most factor 2 in the communication bound). First, if there
are items $i^*$ with probability more than 1/3, make them with
probability $1/3$: set $\mu'(i^*)=1/3$.  Second, all the probabilities
that are too small can be brought up to at least $1/4N$, while
affecting the other probabilities only by a constant as follows: (1)
construct $\mu'(i)=\max\{\mu(i), 1/2N\}$ (except for items $i^*$), (2)
let $\zeta=\sum_i \mu'(i)\le \sum_i (\mu(i)+1/2N)=1.5$, and (3) set
$\mu''(i)=\tfrac{1}{\zeta}\mu'(i)$. It's not hard to verify now that
$\mu''\in \cM$, as well as that $\mu''(i^*)\le 1/2$ and for the other
items $\lg1/\mu''(i)\le 2\lg 1/\mu(i)$.  We also assume that $m\ge \lg
N+4$.

Our scheme $\acs=(\Enc,\Dec)$ uses $T := \lg\lg(4N)$ levels, each
parametrized by positive integers $D_t, m_t$ to be determined
later. We use uniformly random hash functions
\begin{equation*}
  h_t \colon [N] \to [D_t]
  \,
\end{equation*}
where the hash functions are determined using shared public randomness.
The scheme also uses a family of $T$ (deterministic) linear codes, 
$C^{(t)} = \sbr[0]{
  \begin{smallmatrix} c^{(t)}_1 & c^{(t)}_2 & \dotso & c^{(t)}_{D_t}
  \end{smallmatrix}
} \in \bbF_N^{m_t \times D_t}$ for $t \in [T]$, which are specified in the next subsection.
Each matrix $C^{(t)}$ shall be designed to support efficient decoding of \emph{every} $\left(\frac{m_t}{2\lg D_t}\right)$-sparse
vector. We now turn to the formal construction.


\subsection{One level: sensing matrices $C^{(t)}$} \label{sec:syndrome}

For each level of our scheme, the basic building block is the
compressed-sensing matrices designed in the work of \cite{DV13}. These
deterministic constructions produce $m \times N$ linear codes
(matrices over some finite field) that can decode \emph{any}
$k$-sparse vector $x\in \bbF_2^N$ (i.e., any subset of size at most
$k$), where $k := m/(2\lg N)$, in time \emph{polynomial} in $m$ and
$N$.  Note that such a compression scheme is essentially optimal --
the number of $k$-sparse subsets in $[N]$ is ${N \choose k} \approx
2^{k\lg(N/k)}$, hence any deterministic encoding scheme for this
problem must use at least $k\lg(N/k) \approx m$ bits of communication.

We now state the formal theorem from \cite{DV13}.  The theorem relies
on an elegant connection between channel coding and source coding (via
``syndrome decoding''). The central object is the \emph{parity check}
matrix of a \emph{Reed-Solomon} code (see e.g., \cite{Roth06}). To
this end, we denote by $[N,r, d]_{q}$ a Reed-Solomon code over the
alphabet $\bbF_q$ ($q\geq \lg N$), whose codeword length is $N$,
number of codewords is $q^r$, and the minimum Hamming distance between
codewords is $d$ (i.e., the code can correct up to $(d-1)/2$ errors).
Our multi-level scheme uses the following theorem in a black-box
fashion.

\begin{theorem}[Efficient deterministic compressed sensing, \cite{DV13}] \label{thm_DV13}
Let $\bP^N_k  \in \bbF_N^{m\times N}$
be the parity-check matrix of a $[N,N-2k, 2k+1]_{\bbF_N}$ Reed-Solomon code\footnote{We assume here that $N$ is a power
of 2. Otherwise,  replace it with $N' := 2^{\lceil \lg N\rceil}$.}, where $m=2k\lceil \lg N\rceil$.
There is a (deterministic) decoding algorithm that recovers any $k$-sparse vector in $\bbF_2^N$ (i.e., $x\in  \binom{[N]}{k}$)
from $\bP^N_k  \cdot x$ using $O(Nk\lg^2N)$ operations over $\bbF_2$.
In particular, $\bP^N_k \cdot x$ uniquely determines $x$ using $m= 2k\lceil \lg N\rceil$ linear measurements.
\end{theorem}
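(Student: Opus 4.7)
The statement is a repackaging of \emph{syndrome decoding} for Reed-Solomon codes, and my plan is to derive it as a direct corollary of textbook RS decoding. The first step is to view the matrix $\bP^N_k$ as the bit-level expansion of the $\bbF_N$-valued parity-check matrix $H \in \bbF_N^{2k \times N}$ of the $[N,N-2k,2k+1]_{\bbF_N}$ RS code: after fixing any $\bbF_2$-linear identification $\bbF_N \cong \bbF_2^{\lceil \lg N\rceil}$ (which exists since $N$ is a power of $2$), each entry of $H$ unfolds into $\lceil \lg N\rceil$ bits and one obtains a matrix with $m = 2k\lceil \lg N\rceil$ $\bbF_2$-rows. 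For any $x \in \bbF_2^N \subseteq \bbF_N^N$, the two syndromes $\bP^N_k \cdot x$ (over $\bbF_2$) and $Hx$ (over $\bbF_N$) then encode the same information.

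Uniqueness of the $k$-sparse $x$ given $Hx$ follows immediately from the MDS property of the RS code: if $x \ne x'$ are two $k$-sparse vectors in $\bbF_N^N$, then $x - x'$ has Hamming weight at most $2k$, strictly less than the minimum distance $2k+1$, so $x - x'$ cannot be a nonzero codeword and $H(x - x') \ne 0$. This gives the ``uniquely determines'' half of the claim and the injectivity of $x \mapsto \bP^N_k \cdot x$ on $k$-sparse vectors.

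For efficient recovery I would run a standard Reed-Solomon syndrome decoder on $Hx \in \bbF_N^{2k}$, treating $x$ as a weight-$\le k$ error pattern. The pipeline has three stages: (i) compute the error-locator polynomial from the $2k$ syndrome symbols via Berlekamp-Massey (or Peterson-Gorenstein-Zierler) in $O(k^2)$ $\bbF_N$-operations; (ii) find its roots in $\bbF_N$ by Chien search in $O(Nk)$ $\bbF_N$-operations, revealing the support of $x$; (iii) reconstruct the error values on that support via Forney's formula (or by inverting a Vandermonde system) in $O(k^2)$ $\bbF_N$-operations. The dominant term is the $O(Nk)$ field operations from Chien search, and with schoolbook $\bbF_N$-arithmetic each operation costs $O(\lg^2 N)$ bit-operations, yielding the claimed $O(Nk \lg^2 N)$ bound.

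The main obstacle is not conceptual but bookkeeping: pinning down which decoding subroutines and which $\bbF_N$-arithmetic complexities combine to the stated $O(Nk\lg^2 N)$ runtime. One minor subtlety worth flagging is that the RS decoder a priori outputs error values in $\bbF_N$, but by the uniqueness statement those values must in fact land in $\bbF_2 \subseteq \bbF_N$, so no separate projection step is needed. Beyond these details the argument is a verbatim application of the classical RS decoding pipeline.
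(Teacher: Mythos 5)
Your proposal is correct and takes the same route the paper sketches in the paragraph following the theorem: the paper cites \cite{DV13} as a black box and only gives a one-sentence ``rough idea'' (view the $k$-sparse vector as a sparse additive noise pattern, then apply Berlekamp--Massey syndrome decoding for Reed--Solomon codes via the parity-check matrix $\bP^N_k$). Your write-up simply fills in the standard details of that pipeline --- the $\bbF_2$-linear unfolding of the $\bbF_N$-syndrome, uniqueness via the MDS bound $2k < 2k+1$, and the Berlekamp--Massey / Chien-search / Forney cost accounting giving $O(Nk)$ field operations at $O(\lg^2 N)$ bit-cost each --- and your remark that the decoded error values must automatically land in $\bbF_2 \subseteq \bbF_N$ (by uniqueness of the weight-$\le k$ error pattern over $\bbF_N$) is a valid and worthwhile observation. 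One small bookkeeping note: the paper's typesetting $\bP^N_k \in \bbF_N^{m\times N}$ is an imprecision, since a parity-check matrix of a $[N,N-2k,2k+1]_{\bbF_N}$ code has $2k$ rows over $\bbF_N$; your reading of $\bP^N_k$ as the $m\times N$ matrix over $\bbF_2$ obtained by bit-expansion is the intended one and is consistent with $m = 2k\lceil\lg N\rceil$ and the stated $\bbF_2$ operation count.
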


The rough idea behind this result (which was used in the past) is to
think of $k$-sparse vectors in $\bbF_2^N$ as a sparse \emph{noise}
vector introduced by a discrete memoryless channel, and then use the
efficient \emph{syndrome-decoding} algorithm for Reed-Solomon codes of
Berlekamp and Massey (see \cite{Roth06}) which recovers the noise
vector (i.e., our desired $k$-sparse subset) from the parity check
matrix $\bP^N_k$.

Of course, the main difference from the setup of Theorem \ref{thm_DV13} and our setup, is that in our case
the original distribution on subsets (i.e., sparse vectors) may be very far from uniform. Nonetheless, our multi-level
scheme uses the construction of \cite{DV13} in each layer. More precisely, for level $t$ of our scheme, our scheme
shall set the matrix $C^{(t)}$ to be the parity-check matrix $\bP^N_k$
with parameters $N:= D_t$, $k := m_t/(2\lg D_t)$ (i.e., it is a matrix
of size $m_t\times D_t$).
This will become clearer in the next section where we present the entire multi-level scheme.


\subsection{Description and Analysis of the Multi-level Scheme}

As mentioned in the previous section, the encoding and decoding of the
input ($S\subseteq [N]$) is defined by an iterative procedure
consisting of $T$ levels, and crucially relies on the linearity of the
encoding in each level. Let $\{D_t\}_{t \in [T]}$ and $\{m_t\}_{t\in
  [T]}$ be numbers to be determined later. The encoder is described in
\Cref{alg:multilevel-enc}, and the decoder is described in
\Cref{alg:multilevel-dec}.

\begin{figure}[h!]
  \renewcommand\algorithmicrequire{\textbf{input}}
  \renewcommand\algorithmicensure{\textbf{output}}
  \algcaption{$\Enc$ for multi-level scheme}
  \label{alg:multilevel-enc}
  \begin{algorithmic}[1]
    \REQUIRE
      subset $S \subseteq [N]$ \; (represented as the indicator vector $\one_S \in \{0,1\}^N$).

    \ENSURE
      message $y \in \cbr{0,1}^m$.

      For each $t\in [T]$, let $y^{(t)} := \sum_{i\in S} C^{(t)}\cdot \one_{\{h_t(i)\}}$, where $C^{(t)}$ is the $m_t\times D_t$
      matrix $\bP^{N_t}_{k_t}$ \\
      from Theorem \ref{thm_DV13}, instantiated with $N_t:= D_t$, $k_t := m_t/(2\lg D_t)$. i.e., $y^{(t)} = \sum_{i \in S} c^{(t)}_{h_t(i)}$.
      \RETURN
	concatenated string $y := (y^{(1)},y^{(2)},\dotsc,y^{(T)})$ 

  \end{algorithmic}
  \bottomrule

  \bigskip

  \algcaption{$\Dec_\mu$ for multi-level scheme}
  \label{alg:multilevel-dec}
  \begin{algorithmic}[1]
    \REQUIRE
      message $y = (y^{(1)},y^{(2)},\dotsc,y^{(T)}) \in \cbr{0,1}^m$, and a prior distribution $\mu\in \cM_m$.

    \ENSURE
      subset $\hat S \subseteq [N]$.

    \STATE
      Let $B_t := \cbr[0]{ i \in [N] : 2^{-2^t} \leq \mu(i) < 2^{-2^{t-1}} }$
      for $t \in [T]$.

    \STATE
      Initialize $\hat S := \emptyset$.

    \FOR{$t=1,2,\dotsc,T$}

      \STATE
        Let $\hat z^{(t)}$ be the output of the decoder for $C^{(t)}$ applied to
        $y^{(t)}$, guaranteed by Theorem \ref{thm_DV13}.

      \FOR{each $i \in B_t$}

        \IF{$\hat z^{(t)}_{h_t(i)} = 1$}

          \STATE
            Let $\hat S := \hat S \cup \cbr{i}$.

          \FOR{$\tau = t+1,t+2,\dotsc,T$}

            \STATE
              Let $y^{(\tau)} := y^{(\tau)} - c^{(\tau)}_{h_\tau(i)}$.

          \ENDFOR

        \ENDIF

      \ENDFOR

    \ENDFOR

    \RETURN $\hat S$

  \end{algorithmic}

  \bottomrule
\end{figure}


We now turn to the analysis of the scheme, whose centerpiece is the following theorem.

\newcommand\ceil[1]{\left\lceil #1 \right\rceil}
\begin{theorem}
  \label{thm:multilevel}
  Fix $\delta \in \intoo{0,1}$ and positive integer $m^*$.
  Set
  \begin{equation}
    D_t \ := \
    \ceil{
      \frac{T}{\delta}
      \cdot
      \del{
        2^{2\cdot 2^{t}}/2 + \frac{(m^*)^2}{2^{2t}}
      }
    }
    \,, \quad t \in [T] \, ,
    \label{eq:cond-D_t}
  \end{equation}
  and
  \begin{equation}
    m_t \ := \
    \ceil{
     2\lg D_t\cdot \min\left\{\frac{m^*}{2^{t-1}}, \frac{4m^*}{\lg m^*}\right\}
    }
    \,, \quad t \in [T] .
    \label{eq:cond-m_t}
  \end{equation}
  Then for any $\mu \in \cM$ and $S$ satisfying
  Eqn.~\eqref{eqn:huffmanBound} with the fixed value of $m^*$, the
  Algorithm~\ref{alg:multilevel-dec} outputs the set $\hat S=\Dec_\mu(\Enc(S))$ satisfying:
  \begin{equation*}
    \Pr[\hat S = S]
    \ \geq \
    1 - \delta
    \,.
  \end{equation*}
\end{theorem}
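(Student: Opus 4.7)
The plan is to prove correctness by combining a union bound over ``good collision events'' at each of the $T$ levels with an induction on $t$ showing that, conditional on the good events, the decoder exactly recovers $S\cap B_t$ in iteration $t$. Two quantitative facts drive the entire argument. First, since $\mu$ is a probability distribution and $i\in B_t$ implies $\mu(i)\ge 2^{-2^t}$, we have $|B_t|\le 2^{2^t}$. Second, writing $s_s:=|S\cap B_s|$ and using $\lg 1/\mu(i)\ge 2^{s-1}$ for $i\in B_s$, the Huffman bound $\sum_{i\in S}\lg 1/\mu(i)\le m^*$ gives
\[
|S\cap B_{\ge t}|\;=\;\sum_{s\ge t}s_s\;\le\;\frac{1}{2^{t-1}}\sum_{s\ge t}s_s\cdot 2^{s-1}\;\le\;\frac{m^*}{2^{t-1}}.
\]
Moreover, splitting $|S|=\sum_s s_s$ at the crossover $s^\ast\approx\lg\lg m^*$ where $|B_s|$ and $m^*/2^{s-1}$ balance, and using $s_s\le\min(|B_s|,m^*/2^{s-1})$, yields the complementary bound $|S\cap B_{\ge t}|\le|S|=O(m^*/\lg m^*)$. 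Together these match the sparsity budget $k_t:=m_t/(2\lg D_t)=\min(m^*/2^{t-1},\,4m^*/\lg m^*)$ that the syndrome decoder of Theorem~\ref{thm_DV13} tolerates at level $t$.

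For each level $t$, I would define the good event $G_t$ as the conjunction of (i) $h_t$ is injective on $S\cap B_{\ge t}$, and (ii) no $i\in B_t\setminus S$ has $h_t(i)\in h_t(S\cap B_{\ge t})$. Since $h_t$ is uniform on $[D_t]$,
\[
\Pr[\neg G_t]\;\le\;\frac{\binom{|S\cap B_{\ge t}|}{2}+|B_t|\cdot|S\cap B_{\ge t}|}{D_t}\;\le\;\frac{|B_t|^2/2+|S\cap B_{\ge t}|^2}{D_t},
\]
where the last step uses AM--GM on the cross term. Substituting $|B_t|\le 2^{2^t}$ and $|S\cap B_{\ge t}|\le m^*/2^{t-1}$ into the numerator matches precisely the form of~\eqref{eq:cond-D_t} (up to absorbable constants), so the prescribed $D_t$ forces $\Pr[\neg G_t]\le\delta/T$; a union bound over $t\in[T]$ then yields $\Pr[\bigcap_{t=1}^T G_t]\ge 1-\delta$.

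Conditional on $\bigcap_t G_t$, I would induct on $t$ with the invariant: at the start of iteration $t$, $\hat S=S\cap B_{<t}$, and the current stored value of $y^{(t)}$ equals $\sum_{i\in S\cap B_{\ge t}}c^{(t)}_{h_t(i)}$. The invariant at $t=1$ is immediate, and it is preserved across iterations precisely because of the \emph{linearity} of the encoding: the updates in lines 8--10 subtract off exactly the code-word contributions of the newly decoded items at every future level. Set $z^{(t)}:=\sum_{i\in S\cap B_{\ge t}}\one_{\{h_t(i)\}}\in\bbF_2^{D_t}$. Part (i) of $G_t$ guarantees that $z^{(t)}$ has exactly $|S\cap B_{\ge t}|\le k_t$ nonzero coordinates, so Theorem~\ref{thm_DV13} returns $\hat z^{(t)}=z^{(t)}$. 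Parts (i)--(ii) of $G_t$ together then ensure that for each $i\in B_t$, $z^{(t)}_{h_t(i)}=1$ if and only if $i\in S\cap B_t$: if $i\in S\cap B_t$ then $i$ is its own unique preimage of $h_t(i)$ in $S\cap B_{\ge t}$, and if $i\in B_t\setminus S$ then (ii) keeps $h_t(i)$ off the support of $z^{(t)}$. Hence iteration $t$ adjoins exactly $S\cap B_t$ to $\hat S$, closing the induction, and because $\mu\in\cM$ forces $\lg 1/\mu(i)\le\lg 4N$ (so every item of $S$ lies in some $B_t$ with $t\le T=\lg\lg 4N$), one concludes $\hat S=\bigcup_{t=1}^T(S\cap B_t)=S$.

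The main obstacle I anticipate is not conceptual but quantitative bookkeeping: the sparsity budget $k_t$ must simultaneously dominate $|S\cap B_{\ge t}|$ under both the $m^*/2^{t-1}$ bound (tight at large $t$) and the $O(m^*/\lg m^*)$ bound (tight at small $t$ via $|B_t|\le 2^{2^t}$), while $D_t$ must absorb both the $|B_t|^2$ and $|S\cap B_{\ge t}|^2$ collision costs together with the $T/\delta$ union-bound slack. This is precisely what the $\min$ in~\eqref{eq:cond-m_t} and the two additive terms in~\eqref{eq:cond-D_t} are calibrated to deliver; any looseness here would propagate directly into a worse final $m=\sum_t m_t$ and thereby into the $O(\lg\lg N)$ competitive ratio claimed by Theorem~\ref{thm:main}.
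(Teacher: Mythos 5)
Your proof follows the same outline as the paper's: define per-level hash-collision events, union-bound them using the choices of $D_t$, and then induct on $t$, using linearity to peel off already-decoded items and the sparsity budget $k_t=m_t/(2\lg D_t)$ to invoke Theorem~\ref{thm_DV13}. Two concrete points of divergence are worth flagging, one in your favor and one against.

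In your favor: your event $G_t$ differs from the paper's $E_t$, and your version is the one actually needed. The paper's $E_t$ asks for (1) $h_t$ injective on $B_t$ and (2) no collisions between $S_t$ and $S_{t+1:T}$. But to guarantee $\hat z^{(t)}_{h_t(i)}=0$ for $i\in B_t\setminus S$, one must also rule out $h_t(i)=h_t(j)$ for $j\in S_{t+1:T}$ — otherwise such an $i$ is spuriously added to $\hat S_t$. Your condition (ii) (no $i\in B_t\setminus S$ collides with $h_t(S\cap B_{\geq t})$) handles exactly this, whereas the paper's sentence claiming the extra nonzeros of $z^{(t)}$ ``are not the image of any $i\in B_t$'' does not follow from $E_t$ as stated. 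Against you: precisely because your event is stronger, its union bound is larger. You pay $|B_t|\cdot|S\cap B_{\geq t}|$ for the cross term rather than the paper's $|S_t|\cdot|S_{t+1:T}|\leq\tfrac14|S_{t:T}|^2$, and after your AM--GM step the numerator becomes at most $|B_t|^2/2+|S\cap B_{\geq t}|^2\leq 2^{2\cdot 2^t}/2+4(m^*)^2/2^{2t}$, i.e.\ a factor $4$ larger than the second summand in \eqref{eq:cond-D_t}. Since the theorem fixes $D_t$ exactly, this yields $\Pr[\neg G_t]\leq 4\delta/T$ and hence failure probability $4\delta$, not $\delta$ — not an ``absorbable constant'' under the stated parameters, though trivially fixed by enlarging $D_t$ by that factor. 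Similarly, your crossover argument for $|S|=O(m^*/\lg m^*)$ is a valid alternative to the paper's entropic one (at least half of $S$ has $\mu(i)\leq 2/|S|$, so $(|S|/2)\lg(|S|/2)\leq m^*$), but it produces a looser constant than the $4m^*/\lg m^*$ that $m_t$ in \eqref{eq:cond-m_t} is calibrated against; you should either tighten the crossover bookkeeping or switch to the entropic bound to match $k_t$ exactly.
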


  We now briefly verify that Theorem~\ref{thm:multilevel} implies
  Theorem~\ref{thm:main}, when we set $m^*=m/\alpha$ where $\alpha=O(\lg\lg
  N+\lg 1/\delta)$. Since $\lg D_t\le \lg 2T/\delta+O(2^t)+O(\lg
  m^*)$, we have $m_t\le O(m^*(1+2^{-t+1}\lg 2T/\delta))$. The total message
  length over all the $T$ levels is thus
  \[
    \sum_{t=1}^T m_t = O(m^*\cdot T)+O(m^*\cdot \lg 2T/\delta)\le
    m^*\cdot \alpha=m.
  \]
  Using Theorem~\ref{thm_DV13}, it is also clear that the running times of 
  Algorithm~\ref{alg:multilevel-enc} and Algorithm~\ref{alg:multilevel-dec} are $\text{poly}(N)$.

\begin{proof}[Proof of \Cref{thm:multilevel}]
  Fix $\mu \in \cM$ and $S$ satisfying Eqn.~\eqref{eqn:huffmanBound}.
  Because every $i \in S$ satisfies $\lg(1/\mu(i)) \leq \lg(4N)$, we may partition
  $S$ into $S_t := S \cap B_t$ for $t \in [T]$.
  Also let $S_{t:T} := S_t \cup S_{t+1} \cup \dotsb \cup S_T$ for $t
  \in [T]$.
  Let $E_t$ be the event in which the following hold:
  \begin{enumerate}[leftmargin=2em]
    \item
      $h_t(i) \neq h_t(j)$ for all distinct $i,j \in B_t$;

    \item
      $h_t(i) \neq h_t(j)$ for all $i \in S_t$ and $j \in S_{t+1:T}$.

  \end{enumerate}
  By definition, every $i \in B_t$ satisfies $\mu(i) \geq 2^{-2^t}$,
  and hence $\abs{B_t} \leq 2^{2^t}$.  
Furthermore, every $i \in  S_{t:T}$ satisfies $\mu(i) \leq 2^{-2^{t-1}}$, 
or equivalently,  $1 \leq \frac{\lg (1/\mu(i))}{2^{t-1}}$. Therefore, it holds that
  \begin{equation} \label{eqn:|S_{t:T}|}
    \abs{S_{t:T}} \leq     \sum_{i \in S_{t:T}} 1  
    \ \leq \
    \sum_{i \in S_{t:T}}
    \frac{
      \lg(1/\mu(i))
    }{
      2^{t-1}
    }
    \ \leq \
    \frac{
      \sum_{i \in S}\lg(1/\mu(i))
    }{
      2^{t-1}
    }
    \ \leq \
    \frac{m^*}{2^{t-1}}
    \,,
  \end{equation}
where the final inequality follows since the set $S$ satisfies Eqn.~\eqref{eqn:huffmanBound}.

Now we note that
  \begin{equation*}
    \abs{S_t} \cdot \abs{S_{t+1:T}}
    \ \leq \
    \frac14 \cdot \abs{S_{t:T}}^2
    \ \leq \
    \frac{(m^*)^2}{2^{2t}}
    \,.
  \end{equation*}
  Therefore, by a union bound, the probability that $E_t$ holds is
  \begin{equation*}
    \Pr(E_t)
    \ \geq \
    1 -
    \del{
      \binom{\abs{B_t}}{2}
      + \abs{S_t} \cdot \abs{S_{t+1:T}}
    } \cdot \frac1{D_t}
    \ \geq \
    1 - \frac{\delta}{T}
    \,,
  \end{equation*}
  where the second inequality uses the choice of $D_t$ in
  Eqn.~\eqref{eq:cond-D_t}.
  By another union bound over all $t \in [T]$, it follows that the event $E :=
  E_1 \cap E_2 \cap \dotsb \cap E_T$ holds with probability at least $1-\delta$.

  For the rest of the analysis, we condition on the occurrence of the event $E$.
  Let $\hat S_t$ be the set of items that \Cref{alg:multilevel-dec} adds to
  $\hat S$ in iteration $t$.
  It suffices to prove that if $y$ is the encoding of items belonging only to buckets $B_t,
  B_{t+1}, \dotsc, B_T$ (i.e., of the indicator vector $\one_{S_{t:T}}$),  
  then upon reaching iteration $t$ of the decoding algorithm, we have
  $\hat S_t = S_t$ (i.e., we argue that in level $t$ we decode precisely the elements in $S_t$).
  Maintaining this invariant is indeed sufficient, because at the end of iteration $t$,
  \Cref{alg:multilevel-dec} subtracts the $C^{(\tau)}$-encoding of elements in
  $\hat S_t \cap B_t$ from $y^{(\tau)}$ for all $\tau > t$. Thus, if $\hat S_t = S_t$,
  then after iteration $t$, the \emph{linearity} of the code implies that the message $y$
  (at least the parts relevant to rounds $>t$) no longer contains the items in
  $S_t$ (and hence $B_t$).

  Since we conditioned on the event $E$, the hash function $h_t$ has no collisions between pairs of items
  in $B_t$, and moreover it has no collisions between items in $S_t$ and items
  in $S \setminus S_t = S_{t+1:T}$ (where we use the assumption that $S =
  S_{t:T}$).
  Therefore, the items in $S_t$ are in one-to-one correspondence with some
  subset of $\supp(z^{(t)})$, where
  \begin{equation*}
    z^{(t)}
    \ := \
    \sum_{i \in S} e_{h_t(i)}
    \,.
  \end{equation*}
  The vector $z^{(t)}$ may have other non-zero entries not in the one-to-one
  correspondence with $S_t$, but they are not the image of any $i \in B_t$ under
  $h_t$.
  This implies that if $\hat z^{(t)} = z^{(t)}$, then $\hat S_t = S_t$.

  We now argue that, indeed, we have $\hat z^{(t)} = z^{(t)}$.
  As argued above, we may assume that $S = S_{t:T}$.
  Observe that $y^{(t)}$ is
  the encoding of $z^{(t)}$ under $C^{(t)}$, i.e., $y^{(t)} = C^{(t)} z^{(t)}$.
  Furthermore, observe that $z^{(t)}$ has at most
  \[ \abs{S_{t:T}} \leq \min\left\{\frac{m^*}{2^{t-1}},\frac{4m^*}{\lg m^*}\right\} \]
  non-entries in total. The first argument in the $\min$ comes from Eqn.~\eqref{eqn:|S_{t:T}|}.
 The second argument in the $\min$ is due to a basic entropic inequality:
at least half of the set $S$ is composed
of items of probability mass at most $2/|S|$, and thus, by
Eqn.~\eqref{eqn:huffmanBound}, $\tfrac{|S|}{2}\lg\tfrac{|S|}{2} \le m^*$; this in turn implies $|S| \leq 4m^*/\lg m^*$.
  Due to the choice of $m_t$ from Eqn.~\eqref{eq:cond-m_t} and Theorem~\ref{thm_DV13}, the decoding of
  $y^{(t)}$ returns $\hat z^{(t)} = z^{(t)}$ as required.
\end{proof}

\section{Lower Bound for Deterministic Schemes}
\label{sec:lowerBound}

We show that asymmetric coding schemes need to be randomized in order
to gain advantage from using the side information.  In particular we
show that if the class of priors is sufficiently rich, then no
\emph{deterministic} asymmetric coding scheme can improve over the
trivial baseline communication, even if we allow arbitrary
(non-linear) schemes and arbitrary decoding time. Note that this
separates the asymmetric information case from the symmetric side
information case---since the Huffman code is a deterministic
(near)-optimal algorithm for the symmetric case.

We will prove the lower bound for the entropy-asymmetric-coding case
(the weakest definition).  We consider the family $\cM_{N,k}$ of prior
distributions that consists of all (product) distributions $\mu^k$
where $\mu$ is supported on some subset $M\subset [N]$ of cardinality
$|M|=2k$ (i.e., each $\mu$ defines a list $L=L(\mu)$ of all
$\binom{2k}{k}$ subsets of $[M]$). More formally,
\[ \cM_{N,k} := \left\{ \mu^k  \; | \; \supp(\mu) \subset M, \;\; M\subset[N], |M|=2k  \right\} . \]
Note that for any prior $\mu^k\in \cM_{N,k}$, we have the
information-theoretic minimum communication to be
$m^*=H(\mu^k)=kH(\mu)\leq k\lg(2k)$. However, the following claim
asserts that any deterministic scheme for $S \in \cM_{N,k}$ must spend
essentially the trivial communication of $\Omega(\lg {N\choose
  k})=\Omega(k\lg N/k)$.

\begin{claim}[Deterministic oblivious compression is impossible] \label{thm_det_impossibility}
Any entropy-asymmetric-coding scheme that handles priors
$\sigma=\mu^k\in\cM_{N,k}$, and achieves $\delta=0$, must have
$m=\Omega(k\lg(N/k))$ bits of communication even though the
information-theoretic minimum is $m^*\le k\lg
2k$. This remains true even without requiring linearity or
computational efficiency.
\end{claim}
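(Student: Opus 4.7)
My plan is to show that, under $\delta = 0$, the encoder $\Enc$ must be injective on \emph{all} of $\binom{[N]}{k}$, from which the elementary counting bound $2^m \ge \binom{N}{k} \ge (N/k)^k$ immediately yields $m \ge k\lg(N/k)$. The proof has three logical steps: derandomize the scheme, obtain injectivity on each size-$2k$ ``window'' $M \subseteq [N]$, and stitch these windows together.

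\textbf{Derandomization.} Definition~\ref{def:entropyAC} with $\delta=0$ demands $\Pr_{\acs, S\sim\sigma}[\Dec_\sigma(\Enc(S)) = S] = 1$, so for every realization $r$ of the shared random string and every $S \in \supp(\sigma)$ the scheme must decode $S$ correctly. Fixing any such $r$ reduces the problem to a deterministic zero-error scheme $(\Enc(\cdot;r),\Dec_\sigma(\cdot;r))$; hence it suffices to prove the lower bound against deterministic encoders, which I do below.

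\textbf{Injectivity, then counting.} For any $M \subseteq [N]$ with $|M|=2k$, let $\mu_M$ be uniform on $M$. Then $\mu_M^k \in \cM_{N,k}$ and $kH(\mu_M)=k\lg(2k)$, so the scheme is required to handle this prior (since by hypothesis it handles $\cM_{N,k}$, i.e.\ $m/\alpha \ge k\lg(2k)$). For every $T \in \binom{M}{k}$, the probability that $k$ iid draws from $\mu_M$ yield precisely the elements of $T$ is at least $k!/(2k)^k > 0$, so $T \in \supp(\mu_M^k)$. Because $\Dec_{\mu_M}$ is a single-valued function, $\Enc$ must be injective on $\binom{M}{k}$. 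Now take any two distinct $S_1, S_2 \in \binom{[N]}{k}$; since $|S_1 \cup S_2|\le 2k \le N$, I can choose such an $M$ with $S_1 \cup S_2 \subseteq M$, and the previous sentence yields $\Enc(S_1)\ne \Enc(S_2)$. Therefore $\Enc$ is injective on the whole of $\binom{[N]}{k}$, giving $2^m \ge \binom{N}{k} \ge (N/k)^k$ and thus $m \ge k\lg(N/k) = \Omega(k\lg(N/k))$. No use is made of linearity or of any bound on decoding time, so the argument holds against arbitrary oblivious schemes.

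\textbf{Main subtlety.} The proof is structurally short; the only delicate point is the derandomization step, which crucially relies on $\delta$ being \emph{exactly} zero. Any positive error probability would let the error events correlate across both the randomness $r$ and the realizations $S\sim\sigma$, so one could no longer fix a single $r$ and obtain a \emph{deterministic} zero-error scheme; this is precisely the escape hatch that Theorem~\ref{thm:randomCode} exploits to get past the $\Omega(k\lg(N/k))$ barrier. A minor verification is that the two sides of the stated separation are indeed both valid: for the priors $\mu_M^k$ used above, $k H(\mu_M) = k \lg(2k)$, so the information-theoretic minimum $m^* \le k\lg(2k)$ is as claimed, while the argument forces $m = \Omega(k\lg(N/k))$.
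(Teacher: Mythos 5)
Your argument is essentially the same as the paper's: both restrict to a uniform prior on a $2k$-element superset $M \supseteq S_1 \cup S_2$, use that every $k$-subset of $M$ has positive probability under $\mu_M^k$ to force injectivity of $\Enc$ on $\binom{M}{k}$, and then conclude global injectivity and the counting bound $m \ge \lg\binom{N}{k}$. You present the argument directly (injectivity on each window, then stitch) whereas the paper phrases it as a contradiction against a hypothesized collision, and you make the derandomization step explicit rather than tacitly moving from "$\delta=0$" to "deterministic scheme," but these are cosmetic differences rather than a genuinely different route.
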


\begin{proof}
The idea is to use the fact that the encoder is oblivious to $\mu$ in
order to argue that any deterministic encoding scheme can in fact be
used to reconstruct \emph{any} $k$-sparse vector in $\bbF_2^N$ (i.e.,
any subset $S\in \binom{[N]}{k}$).  Clearly, the latter compression
problem requires $\lg\binom{N}{k}$ bits of communication, hence the
claim would follow.
Indeed, we claim that a deterministic scheme $\acs=(\Enc,\Dec)$ that
solves the entropy-asymmetric-coding problem,
must satisfy
\[ \forall \;\; S_1\neq S_2 \subset \binom{[N]}{k} \;\; , \;\; \Enc(S_1)\neq \Enc(S_2) . \]
Indeed, suppose this is false, then there is a pair of subsets $S_1\neq S_2 \subset \binom{[N]}{k}$ which are mapped
by $\acs$ to the same message $$\Enc(S_1) = \Enc(S_2) := \pi .$$
Now, consider the set $M := S_1\cup S_2$ and let $\mu_M$ be the uniform distribution over $M$.
Note that $|M| = |S_1\cup S_2| \leq 2k$, and without loss of generality, assume that $|M| = 2k$ (otherwise, add arbitrary elements of $[N]$ to $M$).
In this case, observe that $\mu_M^k \in \cM_{N,k}$, and that $\Pr_{\mu_M^k}[S_1] = \Pr_{\mu_M^k}[S_2] = 1/|M|^k$.
Therefore, with probability at least
$\delta := 1/(2\cdot|M|^k) = 1/(2\cdot (2k)^k) > 0$, the decoding will fail, since
\begin{multline*}
  \Pr_{S\sim \mu_M^k}\del{\Dec_{\mu_M^k}(\Enc(S)) = S} \\
  \leq 1 - 2\delta\cdot \min\left\{\Pr\del{\Dec_{\mu_M^k}(\pi) = S_1}, \Pr\del{\Dec_{\mu_M^k}(\pi) = S_2}\right\} \leq 1-\delta < 1.
\end{multline*}
But this contradicts the premise that $\acs$ is a deterministic communication scheme with respect to $\cM_{N,k}$.
This proves that the worst-case communication length of any deterministic scheme must be $\Omega(k\lg(N/k))$ bits
even under the class of product distributions.
\end{proof}

\begin{remark}
If arbitrary (non-product) distributions are allowed, it is not hard to turn the above argument into an
\emph{average case} lower bound, for example, by considering the distribution $\sig$
 that chooses $S_1$ or $S_2$ each with probability $1/2$, where $S_1,S_2$ are the ``colliding" sets from above
(note that while $\sig \notin \cM_{N,k}$, $|L(\sig)|=2$).
We also remark that this claim essentially states that prior-oblivious deterministic compression cannot perform any better than standard
(``prior-free") compressed-sensing schemes for $k$-sparse vectors in $\bbF_2^N$, which indeed requires  $\Theta(k\lg(N/k))$ bits/measurements.
\end{remark}



\bibliographystyle{alpha}
\bibliography{refs,andoni,dan,main}

\appendix

\section{Connections Between Different Notions of Asymmetric-coding Schemes}
\label{sec_distribution_vs_lists}

In this section, we show connections between different asymmetric
coding schemes. First we show that a list-asymmetric-coding scheme
implies a Huffman-asymmetric-coding scheme.

\begin{claim}\label{list-to-huffman}
  If $\acs$ is a list-asymmetric-coding scheme with parameters
  $m_l^*$ and $\delta$, then $\acs$ is a Huffman-asymmetric-coding scheme
  with parameters $ m^\star_h \le m^\star_l-\lg e $ and $\delta$, and the same, fixed
  communication bound $m$.
\end{claim}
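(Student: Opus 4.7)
The plan is to exhibit, for any distribution $\mu \in \Delta([N])$, a suitable list $L = L(\mu)$ of subsets such that \emph{every} set $S$ satisfying the Huffman condition \eqref{eqn:huffmanBound} (with bound $m_h^\star$) lies in $L$, and such that $|L| \le 2^{m_l^\star}$. Once such a list exists, we simply invoke the list-asymmetric-coding guarantee of $\acs$ on input $L$: by assumption, for any $S \in L$ the decoder $\Dec_L$ recovers $S$ from $\Enc(S)$ with probability at least $1-\delta$, and this automatically transfers to every Huffman-admissible $S$.

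The natural choice is to let $L$ be exactly the set of all subsets $S' \subseteq [N]$ with $\sum_{i \in S'} \lg(1/\mu(i)) \le m_h^\star$, equivalently $\prod_{i \in S'} \mu(i) \ge 2^{-m_h^\star}$. To bound $|L|$, I would use the standard Kraft-type trick:
\[
\sum_{S' \subseteq [N]} \prod_{i \in S'} \mu(i) \;=\; \prod_{i \in [N]} (1 + \mu(i)) \;\le\; \prod_{i \in [N]} e^{\mu(i)} \;=\; e^{\sum_i \mu(i)} \;=\; e.
\]
Every $S' \in L$ contributes at least $2^{-m_h^\star}$ to the left-hand side, so $|L| \cdot 2^{-m_h^\star} \le e$, giving $|L| \le e \cdot 2^{m_h^\star}$. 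Taking logarithms, the condition $|L| \le 2^{m_l^\star}$ holds as soon as $m_h^\star \le m_l^\star - \lg e$, which is exactly the parameter relation claimed.

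The only subtlety is that the decoder for the Huffman scheme receives $\mu$, whereas the list-asymmetric-coding decoder receives $L$; but since $L$ is defined as a deterministic function of $\mu$ (and of $m_h^\star$, which is a global parameter), the Huffman decoder can internally construct $L$ from $\mu$ and then call $\Dec_L$. The encoder is unchanged, so the communication bound $m$ is preserved. I do not expect any real obstacle here; the whole argument is one application of $1 + x \le e^x$ plus the straightforward reduction just described.
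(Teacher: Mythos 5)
Your argument is correct and essentially identical to the paper's: both define $L$ as the set of all Huffman-admissible subsets and bound $|L|$ via the Kraft-type identity $\sum_{S'}\prod_{i\in S'}\mu(i)=\prod_i(1+\mu(i))\le e$, giving $|L|\le e\cdot 2^{m_h^\star}\le 2^{m_l^\star}$. The closing remark about the decoder constructing $L$ from $\mu$ is implicit in the paper but is a fair thing to spell out.
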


\begin{proof}
Consider any distribution $\mu$ over $[N]$. Let $L$ be the list of
subsets $S \subseteq [N]$ that satisfy Eqn.~\eqref{eqn:huffmanBound}. We just need to show that the size of $L$ is
less than $e2^{m_h^\star}\le 2^{m_l^\star}$. A set $S$ satisfies Eqn.~\eqref{eqn:huffmanBound} if and only if
$$
\prod_{i\in S} \mu(i) \geq 2^{-m_h^\star}.
$$
On the other hand
\begin{eqnarray*}
\sum_{S \in L} \prod_{i \in S} \mu(i) &\leq& \sum_{S \subseteq [N]} \prod_{i \in S} \mu(i)\\
&=& \sum _{(x_1,\cdots,x_N)\in \{0,1\}^N} \prod_{i=1}^N \mu(i)^{x_i}\\
&=& \sum _{x_1\in \{0,1\}}\mu(1)^{x_1}\sum _{x_2\in \{0,1\}}\mu(2)^{x_2}\cdots \sum _{x_N\in \{0,1\}}\mu(N)^{x_N}\\
&=& (1+\mu(1))(1+\mu(2))\cdots (1+\mu(N))\\
&\leq& e^{\mu(1)}e^{\mu(2)}\cdots e^{\mu(N)}\\
&=& e.
\end{eqnarray*}
Hence the size of list $L$ is less than $e2^{m_h^\star} \leq 2^{m^\star_l}$ and a list-asymmetric-coding scheme for list $L$, with parameters  $m_l^*$ and $\delta$, yields an error probability $\delta$.
\end{proof}

We now show that entropy-asymmetric-coding is the weakest of the three
definitions, in that a list- or Huffman-asymmetric-coding scheme
implies an entropy-asymmetric-coding scheme (with slightly weaker
parameters).  We first define, for any $\delta >0$ and distribution
$\sig \in \Delta(2^{[N]})$, the $\delta$-\emph{approximate cover size}
of $\sig$ as
\[ \cC(\sig,\delta) :=   \min_{m \in \mathbb{N}}  \left\{ \exists L\subseteq \supp(\sig), |L|\leq 2^m \; , \; \sig(L)\geq 1-\delta  \right\}  .  \]

The following claim asserts an upper bound on the cover number in terms of the Shannon entropy of $\sig$.
\begin{claim}[Cover-size vs. Entropy] \label{prop_cover_size_vs_entropy}
For every distribution $\sig$ and $\delta > 0$, it holds that
$$ \Cmu \; \leq \; H(\sig)/\delta.$$
\end{claim}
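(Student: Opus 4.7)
The plan is to produce an explicit list $L$ of ``heavy'' sets in $\supp(\sigma)$ that simultaneously has large $\sigma$-mass and is not too big, by thresholding on the self-information $\lg(1/\sigma(S))$. The key observation is that $\mathbb{E}_{S \sim \sigma}[\lg(1/\sigma(S))] = H(\sigma)$, so Markov's inequality applied to this nonnegative random variable immediately yields a set of high-probability ``typical'' elements whose individual probabilities are not too small.

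Concretely, I would set the threshold $\tau := H(\sigma)/\delta$ and define
\[
  L \ := \ \{ S \in \supp(\sigma) : \sigma(S) \geq 2^{-\tau} \} .
\]
First, I would bound $\sigma(L)$ from below: since $\lg(1/\sigma(S)) \geq 0$ for every $S \in \supp(\sigma)$ and its expectation equals $H(\sigma)$, Markov's inequality gives
\[
  \Pr_{S \sim \sigma}\bigl[ \lg(1/\sigma(S)) > H(\sigma)/\delta \bigr] \ < \ \delta ,
\]
which is exactly the statement that $\sigma(L) \geq 1 - \delta$. Next, I would upper-bound $|L|$ by the standard counting trick: every $S \in L$ contributes at least $2^{-\tau}$ to the total mass, and the total mass is at most $1$, so
\[
  |L| \cdot 2^{-\tau} \ \leq \ \sum_{S \in L} \sigma(S) \ \leq \ 1 ,
\]
giving $|L| \leq 2^{\tau} = 2^{H(\sigma)/\delta}$. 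Combining the two bounds shows that $L$ witnesses $\cC(\sigma,\delta) \leq H(\sigma)/\delta$, as desired.

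There is essentially no obstacle here; this is the standard ``typical set'' argument from information theory, specialized to a one-shot (non-asymptotic) regime via Markov rather than the asymptotic equipartition property. The only minor care needed is to make sure $L \subseteq \supp(\sigma)$ (which holds by construction, since we defined $L$ inside $\supp(\sigma)$) and that $\cC(\sigma,\delta)$ is defined as the logarithm of the list size, so our list of size at most $2^{H(\sigma)/\delta}$ exactly certifies the claimed bound.
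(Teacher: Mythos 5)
Your proof is correct and is essentially identical to the paper's: both define the cover by thresholding on self-information at $H(\sigma)/\delta$, bound the list size by the fact that each element has mass at least $2^{-H(\sigma)/\delta}$, and obtain the $1-\delta$ mass guarantee from Markov's inequality applied to $\lg(1/\sigma(S))$, whose mean is $H(\sigma)$. The only cosmetic nit is that Markov gives $\Pr[\lg(1/\sigma(S)) > H(\sigma)/\delta] \leq \delta$ rather than the strict inequality you wrote, but the conclusion $\sigma(L) \geq 1-\delta$ is unaffected.
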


We remark that the bound is essentially tight, as demonstrated by the
distribution $\sig$ which has an ``atom" of measure $\delta$ and
otherwise uniform on the entire domain.

\begin{proof}
Let $\cG_\delta := \{x :   \lg(1/\sig(x))  \leq H(\sig)/\delta \} $ be the set of elements with ``large" mass under $\sig$.
Indeed, note that $\forall x \in \cG_\delta$ we have $\sig(x) \geq 2^{-H(\sig)/\delta}$, thus it holds that $|\cG_\delta| \leq 2^{H(\sig)/\delta}$.
In order to conclude that $\Cmu \leq H(\sig)/\delta$, it remains to show that $\sig(\cG_\delta) \geq 1-\delta$.
Indeed, Markov's inequality implies that
\[ \sig(\cG_\delta) = 1-\sig(\overline{\cG_\delta}) =  1-  \Pr_{x\sim \sig} \left( \lg \frac{1}{\sig(x)} > \frac{H(\sig)}{\delta} \right)  =
1 - \Pr_{x\sim \sig} \left( \lg \frac{1}{\sig(x)} > \frac{\E\left[ \lg \frac{1}{\sig(x)}\right]}{\delta} \right)  \geq 1- \delta . \]
\end{proof}

The following is a corollary of Claim~\ref{prop_cover_size_vs_entropy}.
\begin{claim}
  If $\acs$ is a list-asymmetric-coding scheme with parameters $m_l^*$
  and $\delta_l$, then $\acs$ can be converted into an
  entropy-asymmetric-coding scheme with parameters
    $m_e^* := \delta_l m_l^*$ and $\delta_e := 2\delta_l$ (and same, fixed
  communication bound $m$).
\end{claim}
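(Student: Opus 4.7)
The plan is to convert a list-asymmetric-coding scheme $\acs$ into an entropy-asymmetric-coding scheme by having the decoder construct (from its knowledge of the prior $\sigma = \mu^k$) a high-probability list $L$ of subsets, and then invoking $\acs$'s list decoder on that list. The key tool is \Cref{prop_cover_size_vs_entropy}, which relates the Shannon entropy of $\sigma$ to its $\delta$-approximate cover size.

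First, fix any prior $\mu$ over $[N]$ and integer $k$ with $k \cdot H(\mu) \leq m_e^* = \delta_l m_l^*$. Let $\sigma := \mu^k$. By independence, $H(\sigma) = k \cdot H(\mu) \leq \delta_l m_l^*$. Applying \Cref{prop_cover_size_vs_entropy} with error parameter $\delta_l$ yields
\[
\cC(\sigma, \delta_l) \;\leq\; H(\sigma)/\delta_l \;\leq\; m_l^*,
\]
so there exists a list $L \subseteq \supp(\sigma)$ with $|L| \leq 2^{m_l^*}$ and $\sigma(L) \geq 1-\delta_l$.

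Next, I would define the new scheme $\acs' = (\Enc', \Dec')$ as follows: the encoder is unchanged, $\Enc' := \Enc$ (it is oblivious, so this is permitted); the decoder $\Dec'_\sigma$, on input $y$, first computes the list $L$ above from its knowledge of $\sigma$ (e.g., by enumerating the $2^{m_l^*}$ most likely subsets under $\sigma$), and then outputs $\Dec_L(y)$. This is well-defined because the decoder has access to $\sigma$ as side information.

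Finally, I would bound the error probability by a union bound over two bad events: (i) the sampled set $S$ falls outside $L$, which has probability at most $\delta_l$ by the cover guarantee; and (ii) $S \in L$ but the list decoder fails, which has probability at most $\delta_l$ by the list-asymmetric-coding guarantee of $\acs$ applied to $L$. Hence
\[
\Pr_{S \sim \sigma,\,\acs'}\!\bigl[\Dec'_\sigma(\Enc'(S)) \neq S\bigr] \;\leq\; \delta_l + \delta_l \;=\; 2\delta_l \;=\; \delta_e,
\]
as required. There is no real obstacle here beyond bookkeeping the parameters; the only subtle point is ensuring the list $L$ used by the decoder is the same deterministic function of $\sigma$ that witnesses the cover bound, so that the list decoding guarantee of $\acs$ (which holds for arbitrary fixed lists) directly applies.
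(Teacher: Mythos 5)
Your proposal is correct and takes essentially the same approach as the paper: both invoke Claim~\ref{prop_cover_size_vs_entropy} to extract a list $L$ with $|L|\le 2^{H(\sigma)/\delta_l}\le 2^{m_l^*}$ covering $1-\delta_l$ mass of $\sigma=\mu^k$, then apply the list decoder and union bound the two failure events. Your write-up is merely somewhat more explicit about the union bound and the construction of $\Dec'_\sigma$, which the paper leaves implicit.
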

\begin{proof}
For any prior $\sigma$ on subsets of $[N]$, there is a list
$L=L(\sig)$ of size at most $2^{H(\sig)/\delta_l}$ which is
``responsible" to $1-\delta_l$ mass of the distribution.\footnote{As
  mentioned before, this ``truncation" of the tail of $\sig$ seems
  inherent to oblivious schemes, as they are \emph{fixed-length}
  encodings.}
So, when the encoding length is
fixed to $m$, Claim \ref{prop_cover_size_vs_entropy} guarantees that
  decoding (w.p.~$1-\delta_l$) all subsets with $\sig(S) \geq 2^{-m_l^*}$ is equivalent
to decoding (w.p.~$1-\delta_l$) all distributions with
Shannon entropy at most $\delta_l m_l^*$.
\end{proof}
Note that $\delta_l m_l^*$ bits are
needed even in the standard compression setup when both parties know
the distribution, hence this notion of decoding is competitive even
with the Shannon entropy benchmark, which is the strongest possible.

\if 0
To conclude a similar statement for
Huffman-asymmetric-coding scheme, consider the list $L$
from the above claim, used to prove the upper bound on the cover
size. It is contained of sets $L=\{S :   \lg(1/\sig(S))  \leq
H(\sig)/\delta \}$. When $\sigma=\mu^k$, then $\sum_{i\in S} \lg
1/\mu(i)=\lg 1/\sigma(i)\leq H(\sig)/\delta$. Hence all sets in $L$
satisfy the Huffman Eqn.~\eqref{eqn:huffman} and hence are decoded
correctly in the conjectured Huffman-asymmetric-coding scheme.
\fi

Similarly, we can show that a Huffman-asymmetric-coding scheme implies an entropy-asymmetric-coding scheme (with some loss in the communication efficiency).

\begin{claim}\label{huffman-to-entropy}
  If $\acs$ is a Huffman-asymmetric-coding scheme with parameters
  $m_h^*$ and $\delta_h$, then for any $\epsilon \in (0,1)$, $\acs$
  is an entropy-asymmetric-coding scheme with parameters
  \[
    m_e^*
    :=
    \left\lfloor
    \frac{1-\delta_h/(2N)}{1+\epsilon}
    \del{
      m_h^* - \del{ \frac1{2\epsilon} + \frac13 } \lg(2N^2/\delta_h)\ln(2/\delta_h)
    }
    \right\rfloor
    ,
    \quad
    \delta_e := 2\delta_h ,
  \]
and same, fixed
  communication bound $m$.
\end{claim}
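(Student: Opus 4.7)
The plan is to reduce to the given Huffman-asymmetric-coding guarantee via a concentration argument: given $\mu \in \Delta([N])$ with $k H(\mu) \leq m_e^*$ and $S \sim \mu^k$, I will show that the Huffman condition $\sum_{i \in S} \lg(1/\mu(i)) \leq m_h^*$ holds with probability at least $1-\delta_h$. Conditional on this event, $\acs$ decodes correctly with probability at least $1-\delta_h$ by assumption, so a union bound yields overall success probability $\geq (1-\delta_h)^2 \geq 1-2\delta_h = 1-\delta_e$, which is the required entropy-asymmetric-coding guarantee.

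To control the Huffman cost, I first reduce to a multiset sum: with $s_1,\ldots,s_k$ the iid samples from $\mu$ and $X_i := \lg(1/\mu(s_i))$, the set-sum $\sum_{i \in S}\lg(1/\mu(i))$ is dominated by the multiset sum $\sum_{i=1}^k X_i$, whose expectation is $k H(\mu) \leq m_e^*$. The $X_i$'s are however unbounded, which motivates a truncation. I define $T := \lg(2N^2/\delta_h)$ and the ``typical'' support $A := \{j \in [N] : \mu(j) \geq 2^{-T}\}$, so that $\mu(A^c) \leq N\cdot 2^{-T} = \delta_h/(2N)$ and $X_i \leq T$ whenever $s_i \in A$. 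Let $G$ be the event that all samples lie in $A$; a union bound gives $\Pr[G^c] \leq k\,\mu(A^c) \leq \delta_h/2$, using $k \leq N$ (implicit in treating $S$ as a subset of $[N]$). Conditional on $G$, the $X_i$'s are iid in $[0,T]$ with mean at most $H(\mu)/\mu(A) \leq H(\mu)/(1-\delta_h/(2N))$, so $\mu^\star := \E[\sum_i X_i \mid G] \leq m_e^*/(1-\delta_h/(2N))$, and $V := \Var[\sum_i X_i \mid G] \leq T\mu^\star$ using $\Var[X_i\mid G] \leq T\,\E[X_i\mid G]$.

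The next step is to apply Bernstein's inequality to the bounded iid $X_i$'s conditional on $G$:
\[
\Pr\left[\sum_i X_i \geq \mu^\star + t \;\Big|\; G\right] \leq \exp\left(-\frac{t^2}{2V + 2Tt/3}\right).
\]
Setting $t := \epsilon\mu^\star + \left(\tfrac{1}{2\epsilon} + \tfrac{1}{3}\right) T\ln(2/\delta_h)$ and using $V \leq T\mu^\star$, an AM-GM split on $\sqrt{2V\ln(2/\delta_h)}$ pushes this bound down to at most $\delta_h/2$. Meanwhile, the definition of $m_e^*$ is engineered precisely so that $(1+\epsilon)\mu^\star + \left(\tfrac{1}{2\epsilon}+\tfrac{1}{3}\right)T\ln(2/\delta_h) \leq m_h^*$, i.e., $m_h^* \geq \mu^\star + t$, which closes the loop and gives $\Pr[\sum_i X_i > m_h^* \mid G] \leq \delta_h/2$. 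Combining with $\Pr[G^c] \leq \delta_h/2$ yields $\Pr[\sum_i X_i > m_h^*] \leq \delta_h$ unconditionally, completing the reduction.

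The hard part will be pinning down the exact Bernstein constants so that the correction coefficient matches the claimed $\tfrac{1}{2\epsilon} + \tfrac{1}{3}$, rather than the slightly weaker $\tfrac{1}{2\epsilon} + \tfrac{2}{3}$ produced by the naive inversion $t \geq \sqrt{2V\ln(2/\delta_h)} + 2T\ln(2/\delta_h)/3$; achieving the stated constants may require a Bennett-style argument or a direct quadratic solve for the Bernstein threshold that avoids the lossy $\sqrt{a+b} \leq \sqrt{a}+\sqrt{b}$ step. A secondary detail is to justify the $k \leq N$ convention needed in the truncation bound, which is natural if $S$ is interpreted as a subset of $[N]$ rather than as an arbitrary-length iid sequence.
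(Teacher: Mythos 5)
Your proposal matches the paper's proof essentially step for step: the same truncation to $\{i : \mu(i) \geq \delta_h/(2N^2)\}$, the same bound on the conditional mean $H_E(\mu) \leq H(\mu)/(1-\delta_h/(2N))$, the same variance bound $V \leq T\cdot kH_E(\mu)$, the same Bernstein application and AM-GM split $\sqrt{2AB}\leq \epsilon A + B/(2\epsilon)$, and the same implicit reliance on $k\leq N$ in the truncation step. Your side observation about the Bernstein constant is also correct: the coefficient $\tfrac1{2\epsilon}+\tfrac13$ comes from the sub-gamma form of the inversion $\Pr\bigl(\sum (X_i-\E X_i)\geq \sqrt{2v\lambda}+\tfrac{b\lambda}{3}\bigr)\leq e^{-\lambda}$ (à la Boucheron--Lugosi--Massart), whereas naively inverting the $\exp\bigl(-t^2/(2v+\tfrac{2bt}{3})\bigr)$ tail via $\sqrt{a+b}\leq\sqrt a+\sqrt b$ would indeed give $\tfrac1{2\epsilon}+\tfrac23$.
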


\begin{proof}
  \newcommand\Head{\mathsf{Head}}
  \newcommand\Tail{\mathsf{Tail}}
  Assume $\acs$ is a Huffman-asymmetric-coding scheme with parameters $m_h^*$ and $\delta_h$.
  Take any $\mu \in \Delta([N])$ with $kH(\mu) \leq m_e^*$.
  Define $\delta_0 := \delta_h/(2N^2)$.
  Let $\Head := \{ i \in [N] : \mu(i) \geq \delta_0 \}$ and $\Tail := [N] \setminus \Head$.
  Let $E$ be the event where $S \sim \mu^k$ satisfies $S \subseteq \Head$.
  Since $(1-N\delta_0)^k \geq 1 - Nk\delta_0 \geq 1-\delta_h/2$, it follows that
  \[
    \Pr_{S \sim \mu^k}(E) \geq 1 - \delta_h/2 .
  \]
  Furthermore, conditional on $E$, we can bound the expected value of $\sum_{i \in S} \lg(1/\mu(i))$ as follows:
  \[
    k H_E(\mu)
    := \E_{S \sim \mu^k}\sbr{ \sum_{i \in S} \lg(1/\mu(i)) \;\middle\vert\; E }
    = \frac{k}{1 - \mu(\Tail)} \sum_{i \in \Head} \mu(i) \lg(1/\mu(i))
    \leq \frac{k}{1-\delta_h/(2N)} H(\mu)
    .
  \]
  By Bernstein's inequality, we have
  \[
    \Pr_{S \sim \mu^k}\del{
      \sum_{i \in S} \lg\frac1{\mu(i)}
      \leq k H_E(\mu)
      + \sqrt{2k H_E(\mu) \lg\del{\frac{2N^2}{\delta_h}} \ln\del{\frac{2}{\delta_h}}}
      + \frac{\lg\del{\frac{2N^2}{\delta_h}}\ln\del{\frac{2}{\delta_h}}}{3}
      \;\middle\vert\; E
    } \geq 1-\frac{\delta_h}{2} .
  \]
  Therefore, with probability at least $1-\delta_h$ over the random draw $S \sim \mu^k$, we have
  \begin{align*}
    \sum_{i \in S} \lg(1/\mu(i))
    & \leq
    \frac{k H(\mu)}{1-\delta_h/(2N)} + \sqrt{\frac{2k H(\mu) \lg(2N^2/\delta_h)\ln(2/\delta_h)}{1-\delta_h/(2N)}} + \frac{\lg(2N^2/\delta_h)\ln(2/\delta_h)}{3}
    \\
    & \leq
    \frac{1+\epsilon}{1-\delta_h/(2N)} k H(\mu)
    + \del{ \frac1{2\epsilon} + \frac13 }
    \lg(2N^2/\delta_h)\ln(2/\delta_h)
    \\
    & \leq
    m_h^*
  \end{align*}
  where the second inequality follows from the arithmetic-mean/geometric-mean inequality, and the last inequality uses the definition of $m_e^*$.
  Conditional on this event, $\acs$ correctly decodes the set $S$ with probability at least $1-\delta_h$.
  Thus, $\acs$ is an entropy-asymmetric-coding scheme with parameters $m_e^*$ and $\delta_e = 2\delta_h$.
\end{proof}

\end{document}